\let\longtable*\relax
\newtheorem{prop}{Proposition}
\newtheorem*{lemma*}{Lemma}
\begin{document}
\title{Mitigation of correlated readout errors without randomized measurements}

\date{\today}

\author{Adrian Skasberg Aasen}
\email{adrian.aasen@uni-jena.de}
\affiliation{Kirchhoff-Institut f\"{u}r Physik, Universit\"{a}t Heidelberg, Im Neuenheimer Feld 227, 69120 Heidelberg, Germany}
\affiliation{Institut für Festk\"{o}rpertheorie und -optik,  Friedrich-Schiller-Universit\"{a}t Jena, Max-Wien-Platz 1, 07743 Jena, Germany}

\author{Andras Di Giovanni}
\affiliation{Physikalisches Institut, Karlsruher Institut für Technologie, 76131 Karlsruhe, Germany}

\author{Hannes Rotzinger}
\affiliation{Physikalisches Institut, Karlsruher Institut für Technologie, 76131 Karlsruhe, Germany}
\affiliation{Institut für QuantenMaterialien und Technologien, Karlsruher Institut für Technologie, 76344 Eggenstein-Leopoldshafen, Germany}

\author{Alexey V. Ustinov}
\affiliation{Physikalisches Institut, Karlsruher Institut für Technologie, 76131 Karlsruhe, Germany}
\affiliation{Institut für QuantenMaterialien und Technologien, Karlsruher Institut für Technologie, 76344 Eggenstein-Leopoldshafen, Germany}

\author{Martin G\"{a}rttner}
\email{martin.gaerttner@uni-jena.de}
\affiliation{Institut für Festk\"{o}rpertheorie und -optik,  Friedrich-Schiller-Universit\"{a}t Jena, Max-Wien-Platz 1, 07743 Jena, Germany}

\begin{abstract}
Quantum simulation, the study of strongly correlated quantum matter using synthetic quantum systems, has been the most successful application of quantum computers to date. It often requires determining observables with high precision, for example when studying critical phenomena near quantum phase transitions. Thus, readout errors must be carefully characterized and mitigated in data postprocessing, using scalable and noise-model agnostic protocols. We present a readout error-mitigation protocol that uses only single-qubit Pauli measurements and avoids experimentally challenging randomized measurements. The proposed approach captures a very broad class of correlated noise models and is scalable to large qubit systems. It is based on a complete and efficient characterization of few-qubit correlated positive operator-valued measures, using overlapping detector tomography. To assess the effectiveness of the protocol, observables are extracted from simulations involving up to 100 qubits employing readout errors obtained from experiments with superconducting qubits.

\end{abstract}

\maketitle

\section{Introduction}
\label{sec:introduction}
A significant challenge in scaling up quantum devices to regimes where they can demonstrate a quantum advantage is their sensitivity to noise \cite{Altman2021}. Various error mitigation protocols for near-term quantum hardware have been developed to extend the reach of non-fault-tolerant devices \cite{Temme2017,BonetMonroig2018, vandenBerg2023, Cai2023}. Although the value of these short-term solutions has recently been subject to debate \cite{Quek2024}, there are arguments suggesting that they will remain crucial for reaching practical quantum advantage \cite{Zimboras2025}. What is more, the mitigation of \textit{readout} errors takes a distinct and important role in many applications, as these errors persist even when fault tolerance is implemented. As quantum hardware scales up, unintended readout correlations between qubits become a mounting issue that has only recently been discussed \cite{Tuziemski2023, Zhou2023, Geller2021, Sarovar2020}. In superconducting qubits, readout correlations have been observed in multiplexed dispersive readout between pairs of qubits \cite{Heinsoo2018}. Early efforts to reduce such readout correlations focused on fine-tuning quantum control \cite{deGroot2010} or mitigating them postmeasurement with specialized hardware \cite{Lienhard2022}. 


We have identified four properties which are desirable in practical near-term readout error-mitigation protocols:
\begin{enumerate}[label=\textbf{\Alph*}.]
    \item scalability to large qubit numbers,
    \item robustness against correlated and coherent errors,
    \item access to relevant observables,
    \item practicality of experimental implementation.
\end{enumerate}
Among the most studied readout errors are classical errors, which amount to a statistical redistribution of measurement outcomes. These errors are commonly modeled using a confusion matrix \cite{Maciejewski2020, Nation2021} or addressed by unfolding \cite{Nachman2020}. The methods used to mitigate these types of errors have the advantage of being extremely simple to implement and typically account for the largest part of readout errors observed in experiments. These error-mitigation methods can also be made scalable \cite{Tuziemski2023}, which makes them satisfy properties \textbf{A, C} and \textbf{D}, but fail to address coherent errors. Randomized measurement strategies, such as classical shadows \cite{Huang2020}, are of particular interest because many nonclassical errors are transformed to classical ones under randomization. Classical shadows have been made resistant to noise by the introduction of robust shadow tomography \cite{Chen2021} and related methods \cite{Jnane2024,Koh2022, Arrasmith2023}. These techniques satisfy properties \textbf{A-C}, but are arguably difficult to implement due to the need for randomized global Clifford measurements. Recent efforts have been made to alleviate some of the difficulties of randomized measurements \cite{Hu2025, Onorati2024}, but they often require entangling gates (CNOT) to be effective. Other methods that avoid randomized measurement have been proposed \cite{Aasen2024}, but are not scalable, satisfying only properties \textbf{B-D}.  Correlated readout errors have been avoided entirely with compression readout \cite{Ding2023, Huang2025}, which involves only measurements of a single ancilla qubit, but requires relatively complicated compression circuits, satisfying properties \textbf{A-C}.  To the authors' knowledge, there is no readout error-mitigation method that satisfies properties \textbf{A-D} simultaneously.

In this work, we extend the readout error mitigated state tomography (REMST) protocol introduced in Ref.~\cite{Aasen2024} to be scalable while maintaining desirable properties with only small caveats to \textbf{B} and \textbf{C}. The REMST protocol is based on the calibration of the measurement device using detector tomography \cite{Lundeen2008} and the integration of the reconstructed positive operator-valued measure (POVM) directly into a likelihood-based state estimator. 
Here, we develop a scheme that uses this protocol for the readout error-mitigated extraction of few-body observables that is scalable if readout-induced correlations are limited to small subsystems of qubits. Under this assumption, we can create a tiling of the whole qubit system with at most $n_\text{corr}$ qubits in each tile. The tiling for a qubit device can be constructed efficiently by performing detector overlapping tomography \cite{Tuziemski2023, Cotler2020} and extracting readout correlation coefficients \cite{Tuziemski2023, Maciejewski2021} for all pairs of qubits. Using the correlation coefficients, one can create correlated noise clusters using a heuristic-free hierarchical clustering algorithm.

To demonstrate the utility and necessity of the protocol, we consider different realizations of readout error mitigation without randomized measurements and compare how well they perform in relevant tasks, such as reconstructing few-body reduced density matrices and extracting arbitrary two-point observables. Comparisons are made by numerically simulating measurements from noisy POVMs for system sizes between 16 and 100 qubits. To demonstrate the robustness of our scheme, we consider both artificial POVMs with correlated noise and noisy POVMs extracted from superconducting qubits by performing detector tomography. We find that our protocol performs well across all scenarios tested. When no knowledge of the prepared state is available, our protocol outperforms other alternatives by up to an order of magnitude in mean-squared errors of arbitrary two-point observables.

Our protocol will yield crucial advantages in quantum simulation experiments \cite{Georgescu2014} where precise extraction of few-body observables is necessary but may be limited by readout errors. For example, preparing ground states for studying quantum phases of matter using variational quantum circuits requires measuring the expectation value of the system Hamiltonian which, for quantum spin models or Hubbard models, typically consists of sums of low-weight operators. In addition, for quantum simulations of the critical behavior around phase transitions \cite{Keesling2019, Zhang2025, Schuckert2025}, two-point correlators need to be precisely measured to extract critical exponents. 
When studying nonequilibrium quantum dynamics, the accurate determination of two-point correlators is key for determining the rate of information spreading and probing Lieb-Robinson bounds \cite{Richerme2014, Jurcevic2014}.  Furthermore, the tools and approach presented can be used separately for characterization of the readout correlation structure, which can be used to optimize experimental readout parameters \cite{DiGiovanni2025B}. 
Thus, we expect our scalable readout error-mitigation scheme to be of wide use to quantum simulation experiments, opening up new opportunities for studying quantum many-body phenomena.




\section{Preliminaries}

\subsection{Generalized quantum measurements}
A generalized quantum measurement, also called a positive operator-valued measure (POVM), is a set of Hermitian matrices $\mathbf{M} = \{M_i\}$ that satisfies the following properties
\begin{equation}
    M_i\geq 0, \quad\quad M_i^\dagger = M_i, \quad\quad \text{and} \quad\quad \sum_i M_i = \mathbb{1}.
    \label{eq:POVM_def}
\end{equation}
Each POVM element $M_i$ is associated with an outcome $i$ from a measurement process. The Born rule provides the probabilities for the different possible outcomes when measuring quantum state $\rho$,
\begin{equation}
\label{eq:Borns_rule}
    \Tr(\rho M_i) = \langle M_i \rangle = p_i.
\end{equation}
This provides an operational interpretation of the expectation value of POVM elements as a probability distribution. Thus, the three properties in Eq.~\eqref{eq:POVM_def} are equivalent to positivity, realness, and normalization of the probability distribution, respectively.

The benefit of working with POVMs rather than just projective measurements is that it allows us to capture the process of noise at readout, both classical noise, which manifests itself as a redistribution of the eigenvalues of the POVM elements, and coherent errors, which occur in the off-diagonal elements. 

\subsection{Readout error mitigated state tomography}
In Ref.~\cite{Aasen2024}, a readout error-mitigation protocol was introduced that allows correction of beyond-classical readout errors in quantum state reconstruction. This work is a direct continuation and will provide a scalable protocol.
The basic outline of the approach is as follows:
Quantum detector tomography is used as a calibration step, where the effective POVM $\mathbf{M}$ implemented by the measurement device is reconstructed \cite{Fiurek2001}. The reconstructed POVM is then integrated into the likelihood function 
\begin{equation}
    \mathcal{L}(\rho) \propto \Pi_i \Tr(\rho M_i)^{n_i},
\end{equation}
where $n_i$ is the number of occurrences of outcome $i$. Using a Bayesian mean estimator \cite{BlumeKohout2010} or a maximum likelihood estimator \cite{Lvovsky2004}, a most probable state can be found that is guaranteed to be physical.

\subsection{Correlation coefficients and traced-out POVMs}
\label{sec:correlation_coefficients}

To account for correlated readout errors, it is essential to have a precise method to assess how the readout of one qubit affects the readout of another. To do this, one can check how separable a POVM is between two subsystems. In contrast to computing reduced quantum states, extracting a traced-out POVM from a larger POVM lacks a unique definition and necessitates knowing the quantum state being measured. To create an unambiguous notion of a reduced POVM we follow Ref.~\cite{Tuziemski2023}. Consider a bipartite quantum system, A and B, with POVM $\mathbf{M}^{\text{AB}} = \{M^{\text{AB}}_{i_\text{A}i_\text{B}}\}$, where $i_\text{A}$ and $i_\text{B}$ label the outcome in subsystems A and B, respectively. To derive the POVM that acts on the subsystem A, we define the traced-out POVM $M^{\text{A},\rho_\text{B}}$ conditioned on subsystem B being prepared in state $\rho_\text{B}$ as
\begin{equation}
    M^{\text{A},\rho_\text{B}}_{i_\text{A}} = \sum_{i_\text{B}} \Tr_\text{B}(M^{\text{AB}}_{i_\text{A}i_\text{B}} \left(\mathbb{1}_\text{A} \otimes \rho_\text{B} \right)).
    \label{eq:traced_out_POVM}
\end{equation}
By specifying the state of subsystem B, the relative occurrence of the different possible reduced POVM elements on subsystem A is fixed, and therefore unambiguous.

As initially outlined in Refs.~\cite{Maciejewski2021, Tuziemski2023}, correlation coefficients can be formulated using traced-out two-qubit POVMs. These coefficients are defined as the maximum distance between two reduced POVMs, obtained by preparing the traced-out subsystem in different pure states. Following to the two-qubit "worst-case" definition from Ref.~\cite{Tuziemski2023}, the correlation coefficient obtained from the computational basis POVM is explicitly given by
\begin{equation}
\label{eq:correlation_coefficients}
    c_{\text{B} \rightarrow \text{A}} = \sup_{\rho_\text{B}, \sigma_\text{B}} ||M_0^{\text{A}, \rho_\text{B}}- M_0^{\text{A}, \sigma_\text{B}}||_\infty,
\end{equation}
where it suffices to consider the 0 element of the reduced POVM of subsystem A, as it fully specifies the one-qubit POVM due to the normalization condition $M_0+M_1=\mathbb{1}$.
This coefficient quantifies how strongly the readout of subsystem B impacts the readout of subsystem A. Since this definition of the correlation coefficient is not symmetric, we define a symmetric version 
\begin{equation}
    c_{\text{A} \leftrightarrow \text{B}} =\frac{c_{\text{A} \rightarrow \text{B}} + c_{\text{B} \rightarrow \text{A}}}{2}
\end{equation}
which we will refer to as the correlation coefficient from now on. Certain limitations of the correlation coefficients are addressed in Appendix~\ref{App:Limitation_correlation_coeff}.

\subsection{Overlapping detector tomography and perfect hash families}
\label{sec:overlapping_tomography}
Our protocol requires the extraction of all pairwise correlation coefficients between qubits. To achieve this efficiently for large-scale qubit systems while minimizing the experimental cost, we employ overlapping detector tomography \cite{Tuziemski2023}. Various forms of overlapping tomography have been introduced \cite{Cotler2020,GarcaPrez2020,BonetMonroig2020}, each grounded in the principles of covering arrays or perfect hash families (PHF). For simplicity, this discussion will focus on the variant using PHFs proposed by Ref.~\cite{Cotler2020}, as it offers the greatest flexibility within our framework. For our purposes, a hash family can be denoted as $\Phi_{N,k,v} = \{\phi^{k,v}_i\}_{i=1}^N$, which is a set of $N$ hash functions $\phi$ that assign to $k$ possible input values one of $v$ possible output values \cite{WalkerII2007, dougherty2019}. For a hash family to be perfect, it requires that for any arbitrary subset of $t\leq v$ input values $\{x_1, x_2, \dots ,x_t\}$, there exists a hash function $\phi_i \in \Phi_{N,k,v}$ such that $\{\phi_i(x_1), \phi_i(x_2), \dots, \phi_i(x_t)\}$ has $t$ unique output values. We refer to these as $t$-local PHFs.
In other terms, a $t$-local PHF contains hash functions that ensure that for any collection of $t$ input values, there exists at least one hash function that assigns distinct output values to each of the $t$ inputs. For the rest of this work, we will consider PHFs that have $v=t$. In this work, it is only necessary to use a PHF with $t=2$, which can be generated analytically \cite{Cotler2020},
\begin{equation}
    \phi_i^{k,2}(x) = i\text{th digit of the binary expansion of }(x-1),
    \label{eq:2-local_PHF}
\end{equation}
where $x$ denotes the input value. 

In overlapping detector tomography, the PHFs are used to guarantee that all possible two-qubit POVMs can be reconstructed, which means that, for all possible two-qubit pairs, an informationally complete set of calibration states needs to be measured between them. Specifically, a minimal informationally complete set of calibration states for a single qubit comprises four calibration states, $\{\psi_1, \psi_2, \psi_3, \psi_4\}$, and for two qubits, all $16$ combinations of two single-qubit calibration states are necessary. As an example of how PHFs are used to generate a complete set of calibration states for all qubits, consider the 2-local PHF described in Eq.~\eqref{eq:2-local_PHF}. In this context, the input values $x$ are the indices of each qubit, $x \in \{1,2,3,\dots, n_\text{qubits}\}$.  For each hash function $\phi_i^{k,2}$, each qubit is assigned one of two possible output values 0 or 1. An informationally complete set of calibration states is prepared between the qubits that were assigned 0 and 1.  Specifically, all possible unique combinations of the single-qubit calibration states are prepared between the two groups, with no repeated calibration state, that is, for qubits with label (0,1)  the calibration states $\{(\psi_1, \psi_2), (\psi_1, \psi_3), (\psi_1, \psi_4), (\psi_2, \psi_1), \dots, (\psi_4, \psi_3) \}$ are prepared and measured, where all cases with equal calibration states prepared on each qubit group are removed. Once this is done for all hash functions, four final calibration states are measured with all qubits prepared in $\psi_1$, $\psi_2$, $\psi_3$, and $\psi_4$, respectively. This procedure is straightforwardly generalized to a $t$-local PHF, which ensures that any $t$-qubit POVM can be reconstructed. If the calibration states are replaced by single-qubit basis measurements $\{X,Y,Z\}$, one recovers the overlapping tomography prescription in Ref.~\cite{Cotler2020}. 

The total number of measurements required to reconstruct any $t$-qubit POVM from a $k$-qubit system is proportional to the number of hash functions in the PHF. The expected number of hash functions scales as $e^{\mathcal{O}(t)}\log{k}$ \cite{Cotler2020}, that is, exponential in the subsystem size but logarithmic in the total size of the system, making overlapping detector tomography scalable.

There exist situations where larger $t$-local PHF are useful and we do use a PHF with $t=3$ for fairer comparisons later. Creating an optimal PHF is a hard problem, and few analytical methods are available to generate them \cite{dougherty2019, WalkerII2007}. We provide some PHFs generated by a density algorithm for different $k$ and $t$ in Ref.~\cite{Aasen2025}. An example of the $\Phi_{16,15,3}$ PHF is presented in Appendix~\ref{App:perfect_hash_families}. Although the proposed measurement strategy may not be resource optimal, it offers significant adaptability. Additional optimizations could be achieved through the use of covering arrays \cite{GarcaPrez2020,Kiara2024} 

\section{Method}
\label{sec:method}
This section introduces our main result, a protocol specifically devised to address correlated readout errors in any arbitrary two-qubit observable, with the potential to extend to higher-order observables in certain subsystems. As discussed in Sec.~\ref{sec:introduction}, high-precision measurements of low-order observables in the presence of correlated readout errors are crucial for many applications in quantum simulation. The correlated quantum readout error-mitigation (QREM) protocol is depicted in Fig.~\ref{fig:protocol}. The protocol can be broadly split into two separate parts, readout correlation characterization and noise-cluster-based REMST. The first part is used to sample efficiently extract the correlated structure of the qubit readout which enables construction of correlated noise clusters. The obtained correlated noise structure is used to extract error mitigated observables in the subsequent part.  


\begin{figure*}
 \centering
 \includegraphics[width=0.95\linewidth]{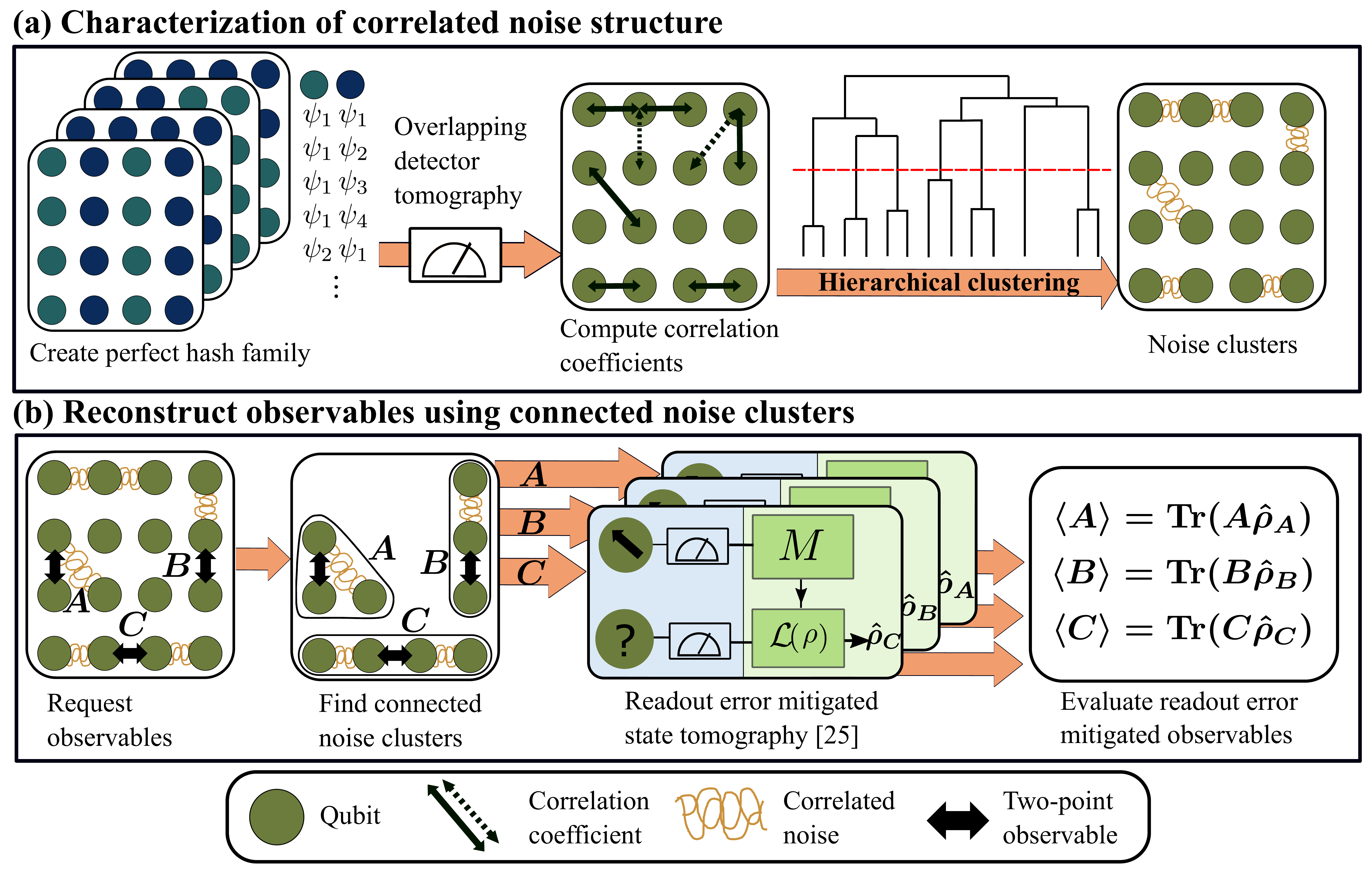}
 \caption{Schematic overview of the correlated readout error-mitigation protocol. \textbf{(a)} The correlated noise is efficiently characterized for the relevant qubit system. A perfect hash family is generated in such a way that all possible two-qubit reduced POVMs can be reconstructed by utilizing parallel measurements. Every hash function assigns one of two distinct labels to each qubit, indicating which calibration state is assigned to each group. Each group label receives an informationally complete set of calibration states between them which is repeatedly measured. Overlapping detector tomography can then be performed where every possible two-qubit POVM is reconstructed. For each two-qubit POVM, the symmetric correlation coefficients are calculated, see Eq.~\eqref{eq:correlation_coefficients}. These coefficients are represented with bidirectional arrows. The dotted bidirectional arrow indicates weaker correlation coefficients. By employing a hierarchical clustering algorithm, the dominant readout noise clusters within the system can be identified. \textbf{(b)} Reconstructing observables with mitigated readout errors. Each requested observable is matched with readout noise clusters to form connected noise clusters, which must be fully characterized. On these connected noise clusters, readout error-mitigated state tomography is carried out to reconstruct the error-mitigated states for each connected cluster. The observables are then computed from the readout error-mitigated states.}
 \label{fig:protocol}
\end{figure*}

\subsection{Noise-cluster characterization}
In general, readout correlations may exist among all qubits in any quantum system, rendering any complete characterization nonscalable. Therefore, in order to efficiently characterize the noise in the entire system, assumptions about the correlation structure have to be made. Fortunately, the extent of correlation in contemporary systems is typically limited, and readout correlations occur only within smaller groups, such as for superconducting qubits that share the same readout microwave line. Therefore, a maximum number of qubits that can be correlated $n_{\text{corr}}$ can be set. 

To find which qubits should be grouped together into noise clusters, we rely on two-qubit correlation coefficients, which are computed from two-qubit POVMs, see Eq.~\eqref{eq:correlation_coefficients}. Reconstructing all possible two-qubit POVMs would require an exhaustive amount of measurements if done naively by measuring all two-qubit POVMs one after another. It can be done significantly more efficiently by employing overlapping detector tomography and perfect hash families to measure all qubits in parallel, see Sec.~\ref{sec:overlapping_tomography}. 

A $2$-local PHF is generated to ensure that each set of two qubits is measured with an informationally complete set of calibration states. Higher $k$'s can be used to create a ``measure first, ask questions later" approach, where the initial calibration measurement can be reused in noise cluster characterization later. In the representation in Fig.~\ref{fig:protocol}\textbf{(a)}, there are 16 qubits, for which four hash functions are sufficient. In each sheet all the qubits of the same colors received the same measurement instructions such that between the two groups there exists an informationally complete set of measurements. Once all four sheets are measured, one can reconstruct all possible two-qubit POVMs. With the two-qubit POVMs the correlation coefficients can be computed as described in Eq.~\eqref{eq:correlation_coefficients}. 

The correlation coefficients form a fully connected graph, facilitating the use of clustering algorithms to identify groups of qubits with the strongest readout correlations. Hierarchical clustering techniques consistently identified qubit clusters across different noise models and experimentally extracted POVMs. These methods provide the benefit of eliminating the need for heuristic parameters, which often differ significantly between different experimental setups or noise models. Although further manual cluster tuning is possible with extra insight into the noise sources, it is not essential for the protocol to function effectively. For more details on the clustering algorithm, see Appendix~\ref{App:Hierarchical_clustering}.

\subsection{Reconstructing readout error-mitigated observables}
The identified noise clusters are used to perform readout error mitigation, cf. Fig.~\ref{fig:protocol} \textbf{(b)}. For a given two-qubit observable to be measured, all noise clusters containing those qubits are grouped to form a \emph{connected} noise cluster. The readout error-mitigated state is reconstructed using the REMST protocol~\cite{Aasen2024} on the entire connected noise cluster. In Appendix ~\ref{App:proving_obs_12} we show that, in the case of a general prepared state, state reconstruction on connected noise clusters is required if both correlated and coherent readout errors are to be mitigated. 
The reconstruction of the noisy POVMs can be carried out on each noise cluster individually.

The observables initially requested are calculated using the readout error-mitigated states received from the REMST protocol. The connected noise clusters can be reused to compute further higher-order observables, even if not originally requested. For details on the implementation, see Appendix~\ref{App:protocol}.




\subsection{Extraction of experimental POVMs}
\label{Sec:Experimentally_extracted_POVMs}
To extract realistic POVMs to test the protocol, detector tomography was performed on a chip with four individual transmon (Xmon) qubits. Each qubit is dispersively coupled to a separate readout resonator. The qubit control and the resonator readout is done with frequency division multiplexing \cite{Jerger2012, Krantz2019}. The chip is measured in a dilution cryostat below $ 15 \, \text{mK}$. For more information on the device used, see Ref.~\cite{DiGiovanni2025A}. Some of the qubits have larger readout correlations as a result of the smaller frequency spacing of the resonators, leading to potential crosstalk.

\section{Results}
\label{sec:results}
To demonstrate that all the desired properties stated in the Introduction are satisfied by the protocol, measurements were simulated on up to 100 qubits. These simulations were based on readout noise extracted from a superconducting qubit system. The properties examined were as follows:
\begin{enumerate}[label=\textbf{\Alph*}.]
    \item scalability to large qubit numbers,
    \item robustness against correlated and coherent errors,
    \item access to relevant observables.
\end{enumerate}
The property \textbf{D} is satisfied by construction as only single-qubit Pauli measurements were used. To make the sampling and noise simulation numerically feasible, we used tensor products of four-qubit states as test states for the QREM protocols and applied noise channels that potentially overlap multiple four-qubit states. This should not be viewed as a limitation of the protocol because the results and protocol complexity would not change if general 16- and 100-qubit states were prepared, see Appendix~\ref{App:proving_obs_12} for more details.


The protocol is tested by reconstructing two-qubit observables. The target states are tensor products of four-qubit Haar-random states, $\rho = \bigotimes_{i=1}^{n_{\text{chunks}}} \rho_i$, where $n_\text{chunks} = n_{\text{qubits}}/4$. The Haar-random pure states $\rho_i$ are generated by drawing a random unitary matrix $U_i$ from the unitary group $U(2^4)$ and applying it to the all-zero state $\rho_0 = \ket{0000}\bra{0000}$, resulting in $\rho_i = U_i \rho_0 U_i^\dagger$. The considered observables are products of Pauli operators  $O = A \otimes B$, where $A, B \in\{X, Y, Z\}$. In Appendix \ref{app:100_qubits_mixed_states} we include a simulation with mixed target states.

Multiple layers of averaging were performed to achieve representative performance estimates. Specifically, each simulation involved numerical measurement samples from $N_s=10$ randomly chosen target states. For each state, $N_p$ random pairs of qubits were selected, and for each selected pair, all two-qubit Pauli observables were reconstructed.
As figures of merit, the mean-squared error (MSE) and state reconstruction infidelity were used. The mean-square error is computed between the reconstructed observables $\hat{O} = \Tr(O \hat{\rho})$ and their ideal values $\langle O\rangle$ from the Haar-random target states,
\begin{equation}
    \label{eq:MSE}
    \text{MSE} = \frac{1}{9} \sum_{i=1}^{9} \left(\langle O_{i} \rangle  -\hat O_i\right)^2,
\end{equation}
where the sum is performed over nine possible two-qubit Pauli observables. Unless stated otherwise, the MSE is also averaged over the $N_s$ random target states and $N_p$ pairs of qubits, which is denoted as the averaged MSE. 
State reconstruction infidelity is computed between the reconstructed error-mitigated state $\hat \rho$ and the target state $\rho$, defined as 
\begin{equation}
    \text{Infidelity} =  1 - F(\rho, \hat \rho) = 1-\left[ \Tr\left(\sqrt{\sqrt{\rho} \hat \rho \sqrt{\rho}}\right)\right]^2.  
\end{equation}
Since only pure target states $\rho$ are used, infidelity simplifies to 
\begin{equation}
    \text{Infidelity} = 1 - \Tr(\rho \hat \rho).
\end{equation}
As for the MSE, infidelity is averaged over $N_s$ random states and $N_p$ random pairs unless otherwise stated.  

To evaluate the statistical robustness of the protocol, each simulation was repeated 10 times and averaged. The standard deviations of these repetitions are shown as error bars in each plot. For all simulations, the maximal size of the noise cluster was set to $n_{\text{corr}}=3$. 

\subsection{16-qubit example with experimental noise}
\label{sec:16-qubit_example}

\begin{figure}
 \centering
 \includegraphics[width=0.95\linewidth]{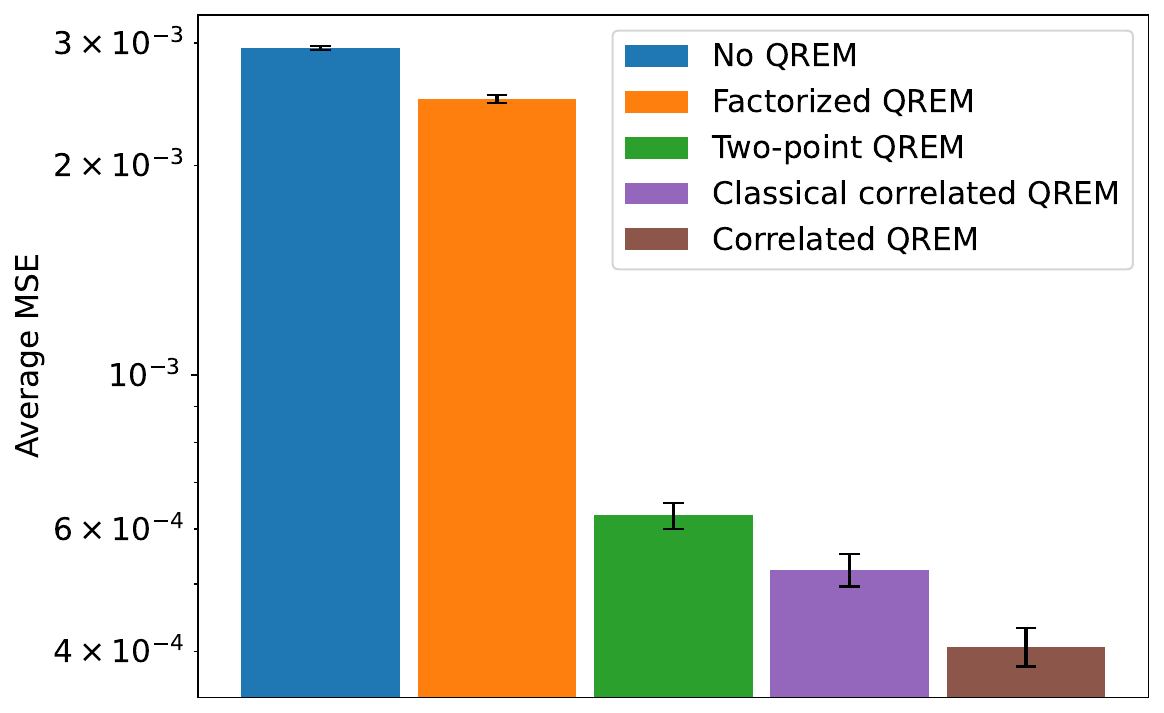}
 \caption{Comparing different scalable readout error-mitigation protocols. Averaged mean-squared error of two-qubit Pauli observables is plotted for five different protocols. The averages are performed over $N_p=20$ randomly chosen pairs for $N_s=10$ random realizations of the 16-qubit states. The error bars are one standard deviation from 10 repeated simulations of the entire protocol with the same readout noise, random states, and pairs chosen.}
 \label{fig:16_qubit_example}
\end{figure}

\begin{figure*}
 \centering
 \includegraphics[width=0.8\linewidth]{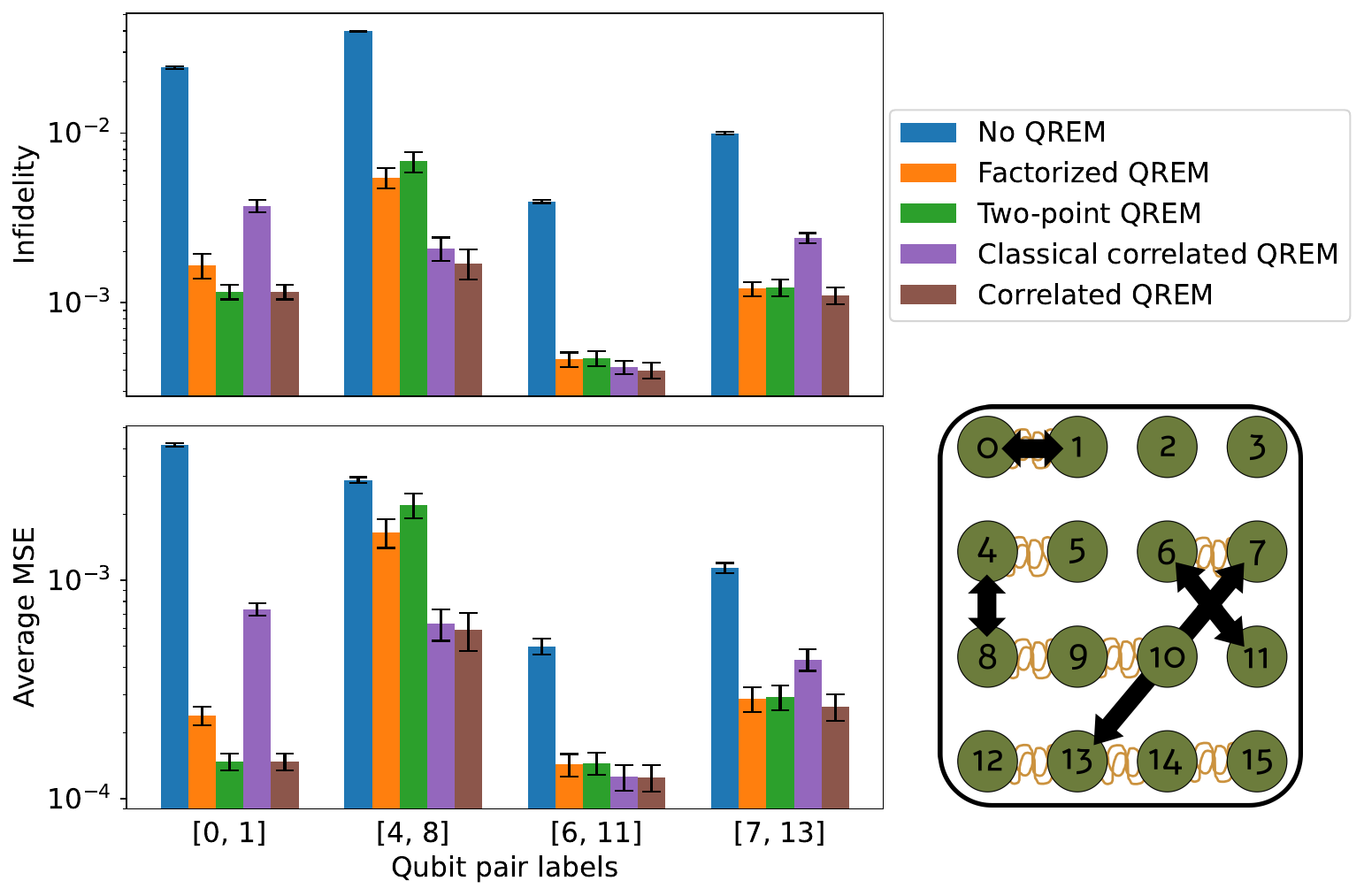}
 \caption{Average readout error-mitigated observables for selected two-qubit pairs. In a 16-qubit simulation, state infidelity and average mean-squared error are evaluated across five distinct implementations of the scalable QREM protocol. The measurement setup is the same as for Fig.~\ref{fig:16_qubit_example}. In the bottom right the experimental $n$-qubit POVMs applied are visualized by orange squiggly lines, indicating potential correlated readout errors.  The specific qubit pairs plotted are indicated by black bidirectional arrows. }
 \label{fig:16_qubit_correlator}
\end{figure*}

To simplify our analysis, we initially focus on a 16-qubit system to test the performance of the protocol. Five different implementations of the protocol are compared in scenarios where some are expected to perform optimally and where no advantage is anticipated. The protocols considered are as follows:
\begin{enumerate}
    \item \textbf{No QREM:} No readout error mitigation is applied.
    \item \textbf{Factorized QREM:} The REMST protocol in which each qubit is treated individually. The reconstructed noisy POVMs are tensor products of single-qubit POVMs. 
    \item \textbf{Two-qubit QREM:} The REMST protocol is applied to the two qubits that are part of the observable. The correlated readout errors are only mitigated for the two qubits participating in the observable. 
    \item \textbf{Classical correlated QREM:} The correlated QREM protocol is applied where only classical errors are corrected. The reconstructed noisy POVMs are produced using the full correlated QREM protocol, but discarding all off-diagonal terms in the POVMs.  
    \item \textbf{Correlated QREM:} The complete correlated QREM protocol is applied. 
\end{enumerate}
Additional details on the different protocols can be found in Appendix~\ref{app:details_on_other_protocols}.

All protocols were evaluated for $N_s = 10$ random realizations of the 16 qubits extracting correlators between $N_p = 20$ random pairs of qubits. 
The readout noise was sampled at random from a large set of experimentally extracted POVMs, see Sec.~\ref{Sec:Experimentally_extracted_POVMs}.  In Fig.~\ref{fig:16_qubit_example} the MSE is plotted as an average over all states, pairs, and observables. For a fair comparison, each protocol received the same measurements in both noise calibration, when applicable, and readout error-mitigated state reconstruction. A 3-local PHF was utilized for calibration, resulting in a total of $9 \times 10^6$ calibration measurements performed. For each requested observable, $\approx 10^5$ measurements were performed.
The correlated QREM protocol performed the best overall, indicating that both beyond-classical readout errors and correlated readout errors were present in the experimentally extracted POVMs. To better understand how the protocols work, specific pairs of qubits are examined in Fig.~\ref{fig:16_qubit_correlator}, where both state reconstruction infidelity and MSE are averaged only over randomly chosen target states. The $[0,1]$ pair featured correlated noise only between the qubits constituting the observable. Therefore, the factorized QREM protocol performed worse than the two-point and correlated QREM protocols, which worked equally well. The classical correlated QREM protocol performs slightly worse, which is expected, as small amounts of nonclassical errors are present in the experimental POVMs. For the pair $[4,8]$, there was no correlated readout noise between the two qubits appearing in the observable, but significant correlated readout noise between the adjacent qubits, specifically for $(4,5)$ and $(8,9,10)$. It is therefore expected that the two-point QREM protocol does not perform too well while both correlated QREM protocols perform well. For the two remaining cases, $[6,11]$ and $[7,13]$, we observe similar performance between all QREM protocols, indicating that there were no significant correlated readout errors.


\subsection{Strong artificial noise}
\label{sec:strong_artifical_noise}
In order to illustrate the applicability of the protocol to correlated and beyond-classical noise  sources (property \textbf{B}), two artificial noise models were examined in a 16-qubit simulation: one characterized by significant correlated errors and the other by pronounced coherent errors. Both MSE and state reconstruction infidelity were averaged over $N_s = 10$ randomly chosen target states and $N_p = 20$ random pairs. The number of calibration measurements and state reconstruction measurements is the same as in Sec.~\ref{sec:16-qubit_example}.  


\subsubsection{Strong correlated noise}
\label{sec:strong_correlation}
\begin{figure}
 \centering
 \includegraphics[width=0.95\linewidth]{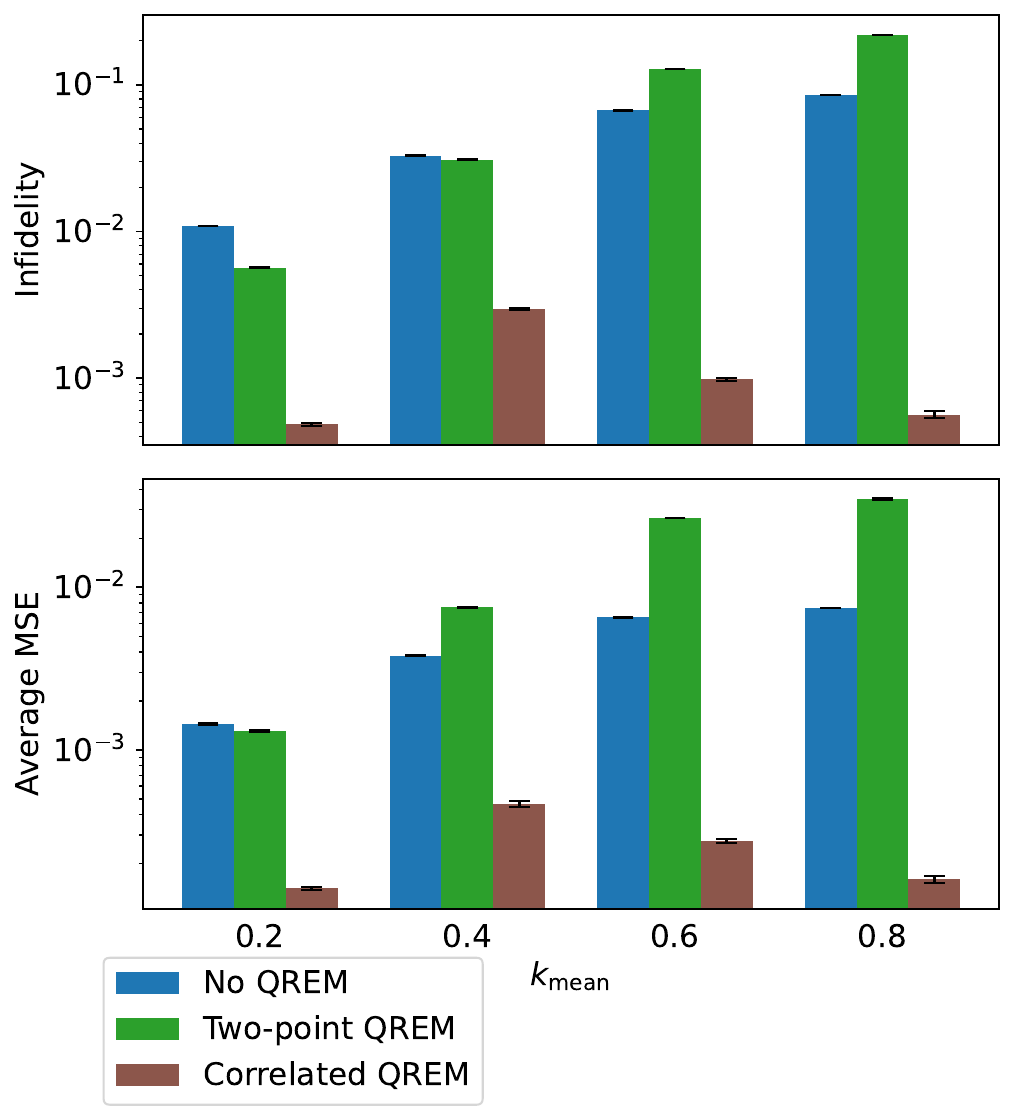}
 \caption{Comparing protocol performance for increasing correlated noise strength. Readout error-mitigated infidelity and averaged mean-squared error for all two-qubit Pauli observables are plotted for the two-point QREM and the correlated QREM protocol. The error bars are the standard deviation from 10 repetitions of the simulation with identical initial conditions.}
 \label{fig:iSWAP_example}
\end{figure}

Correlated noise can be generated by applying a noisy entangling gate between two qubits. The iSWAP gate is the native entangling gate for superconducting qubits and generates symmetric correlation between the two qubits, unlike the CNOT gate, which is directional. The simulated correlated noise is modeled as a probabilistic iSWAP channel
\begin{equation}
    \mathcal{E}(M_i) = (1-k) M_i + k U_{\text{iSWAP}}M_i U_{\text{iSWAP}}^\dagger,
\end{equation}
which can be understood as preserving the quantum state with probability $k$, while with probability $1-k$ an iSWAP gate is applied to the state, correlating the qubits. In the simulation, only neighboring pairs of qubits were correlated, i.e.,~pair $\{(0,1), (2,3), \dots ,(14, 15)\}$. For each pair of qubits, the noise strength $k$ was randomly drawn from a uniform distribution around a mean value $k\in [k_{\text{mean}} -0.1 ,k_{\text{mean}} +0.1]$. 

In Fig.~\ref{fig:iSWAP_example} correlated QREM is compared to no QREM and two-point QREM for different mean noise strengths $k_\text{mean}$. The two-point QREM protocol has weaker assumptions than the factorized QREM protocol, and is a stronger candidate to compare to the correlated QREM protocol. The two-point QREM protocol accounts for correlations between the qubits in the two-point observable, but not other qubits.

With increasing noise strength, both the two-point protocol and the no-mitigation protocol performed significantly worse than the correlated QREM protocol. In particular, the two-point QREM protocol yielded marginally poorer results compared to when the no-mitigation protocol. This can be attributed to the fact that most random pairs of qubits involved qubits that belonged to distinct noise clusters. By overlooking correlated noise, the effective environment for these qubits exhibits non-Markovian behavior, causing a foundational assumption of the two-point protocol to break down, rendering it ineffective. When the mean noise strength is at $k_{\text{mean}}=0.8$, the averaged MSE and the infidelity decrease for the correlated QREM protocol. This is likely because the reconstructed POVM is very close to being a swapping of the two qubits, which is a simpler error to mitigate than a more equal mixture of swapped and not swapped measurements. We note that the correlation coefficients generated by this model with $k_\text{mean} = 0.4$ are comparable to the highest correlation coefficients observed in state-of-the-art superconducting qubit systems \cite{Tuziemski2023}, see Fig.~\ref{fig:dendrogram_tuning} in Appendix~\ref{App:Hierarchical_clustering} for more details.

\subsubsection{Strong coherent noise}
\label{sec:strong_coherent}

\begin{figure}
 \centering
 \includegraphics[width=0.95\linewidth]{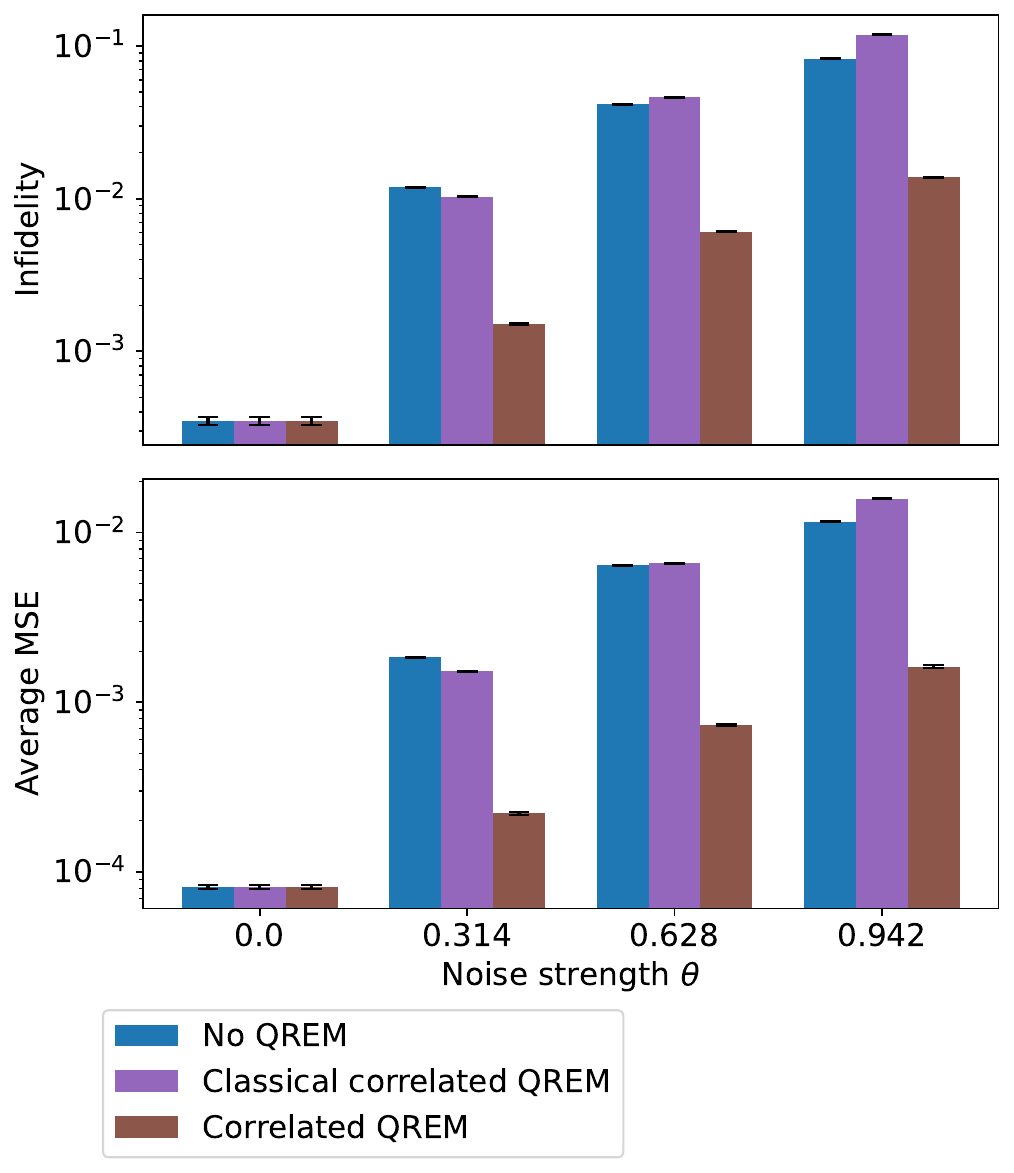}
 \caption{Comparing protocol performance for increasing coherent noise. 
 Readout error-mitigated infidelity and averaged mean-squared error for all two-qubit Pauli observables are plotted for the classical correlated QREM and the correlated QREM protocol. The error bars are the standard deviation from 10 repetitions of the simulation with identical initial conditions.}
 \label{fig:coherent_example}
\end{figure}
Coherent errors can occur when unitary gates are applied incorrectly. As an example for superconducting qubits, single-qubit unitary rotations are applied by sending a resonant driving pulse for a specified time $t$. Inaccuracies in how long this pulse is applied cause either underrotation or overrotation of the qubit. A combination of correlated qubit readout and coherent errors can be modeled by single-qubit overrotations and neighboring two-qubit $XX$ interactions \cite{Chen2021}. Such a noise model can be formulated as a $n-$qubit channel with noise strength $\theta$,
\begin{equation}
\label{eq:coherent_error}
    \mathcal{E}^n_\theta(M_i) = R^n_{x}(\theta)M_iR^{n \dagger}_{x}(\theta),
\end{equation}
where
\begin{equation}
    R^n_x(\theta) =
    \begin{cases}
        \exp(-\frac{i \theta}{2} \sum_{\langle j,k\rangle} X^j\otimes X^k ),& \text{when } n> 1\\
        \exp(-\frac{i \theta}{2} X), &\text{when } n=1.
    \end{cases}
\end{equation}
The 16-qubit correlation structure was reused from the example in Fig.~\ref{fig:16_qubit_example} (inset to the bottom right). For each noise cluster, the channel in Eq.~\ref{eq:coherent_error} was applied with the same noise strength $\theta$ for all clusters. Although this noise model creates a four-qubit noise cluster, we continue to limit the maximum size of the cluster to $n_\text{corr}=3$. This means that the correlated QREM protocol will not be able to capture all correlations in the system.

In Fig.~\ref{fig:coherent_example} both correlation-conscious versions of the protocol are compared to no QREM. Classical correlated QREM only considered a statistical redistribution of the outcomes within the same cluster and ignored any off-diagonal entries in the POVM reconstruction, which is equivalent to using a confusion matrix to mitigate readout errors \cite{Maciejewski2020}. Both classical correlated and no QREM performed similar across the board, while correlated QREM achieved about an order of magnitude better results. There is a clear trend for the correlated QREM protocol upwards, which is due to the four-qubit correlated noise that is not captured with the maximum cluster size $n_\text{corr}=3$. Given that this noise model is strong and unrealistic and is not anticipated in any plausible experiments, the performance of the protocol is considered satisfactory, as it still surpasses all other alternatives. The readout errors associated with a noise strength $\theta = 0.314$ are similar to the coherent errors observed between the qubit pair $[0, 1]$ shown in Fig.~\ref{fig:16_qubit_correlator}, when comparing classical correlated QREM to correlated QREM.

For this simulation a different hierarchical clustering method was used than for all the other simulations due to an artifact in how the correlation coefficients are computed. This is discussed in detail in Appendix~\ref{app:Limitation_coherent_error}.
In Appendix~\ref{app:addional_results_coherent} a simulation is shown with the same clustering method as the other simulations.

\subsection{100-qubit readout error mitigation with experimental noise}
\label{sec:100_qubit_example}

\begin{figure}
 \centering
 \includegraphics[width=\linewidth]{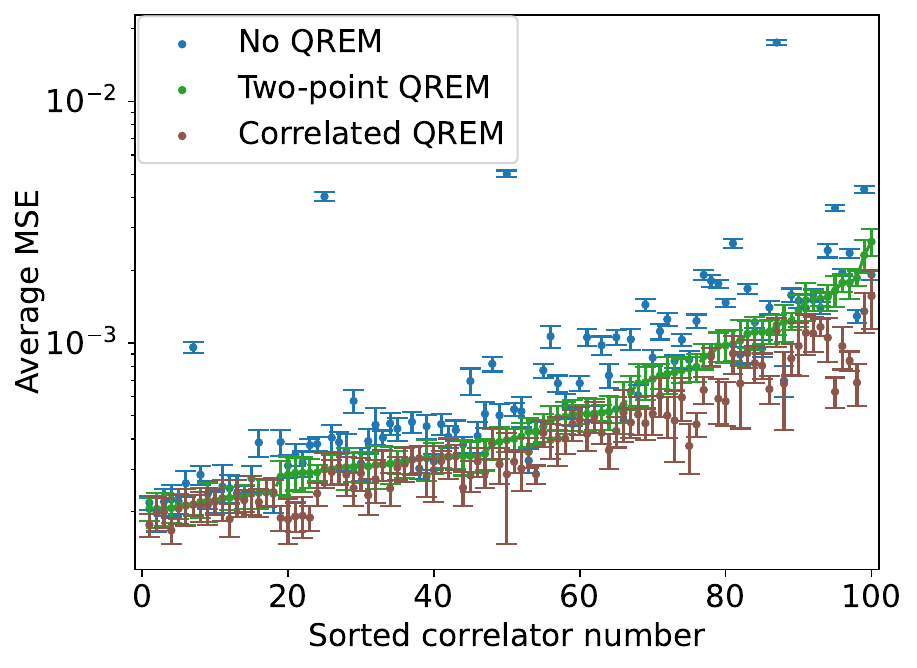}
 \caption{100-qubit example with POVMs extracted from superconducting qubits. The correlated QREM protocol is compared to the two-point QREM protocol. Random pairs are selected and the mean-squared error is computed over all two-qubit Pauli observables for each pair. The pairs are sorted by increasing MSE for the two-point QREM protocol. The error bars are the standard deviation from 10 repetitions of the simulation with identical initial conditions.}
 \label{fig:100_qubit_example}
\end{figure}

To demonstrate scalability (property \textbf{A}), $N_s = 10$ random 100-qubit simulations were performed with readout noise sampled from experimentally extracted POVMs. The calibration step used a total of $11.2 \times 10^6$ measurements, and $\approx 5\times10^4$ measurements were performed for each requested observable. The correlated QREM protocol is compared to the two-point QREM protocol and no QREM. Fig.~\ref{fig:100_qubit_example} shows $N_p = 100$ randomly selected pairs of qubits sorted by increasing two-point QREM averaged MSE. The correlated QREM model consistently shows performance that is on par with or superior to the two-point model across the data set. A more distinct separation between the three methods appears in the more correlated pairs on the right side, highlighting the respective utility of the three protocols as no mitigation, partial mitigation, and full mitigation. For a system of this size, perfect clustering is unattainable, as detailed in Appendix~\ref{App:Hierarchical_clustering}, leading to potential inaccuracies in identifying the appropriate noise cluster for some observables. Nonetheless, the correlated protocol consistently matches or exceeds the performance of other protocols. In Appendix~\ref{app:additional_results} we include two additional 100-qubit simulations, one with mixed target states and the other with single-qubit observables included in the average MSE.

\section{Conclusion and outlook}
We have presented a scalable readout error-mitigation protocol that takes into account the correlated readout structure. Correlated readout errors are efficiently characterized by combining overlapping detector tomography and hierarchical clustering to group the most correlated qubits into noise clusters.
For a desired few-body observable, the involved noise clusters are combined to a connected noise cluster, on which the REMST protocol \cite{Aasen2024} is executed to reconstruct observables with mitigated readout errors. We demonstrate the effectiveness and generality of our protocol by comparing it with previously proposed QREM protocols. Simulations of up to 100 qubits were performed with both experimentally obtained readout noise and simulated correlated readout noise. In every scenario tested, the correlated QREM protocol shows a performance that is, at worst, on par with all competing protocols, but in most cases it substantially exceeds them in terms of reduction of the mean-squared errors of the averaged observables.

Although the protocol's sample complexity is logarithmic in the number of qubits, it is not guaranteed to be optimal due to challenges in identifying minimal measurement strategies. Finding a better parallel measurement scheme would allow for more efficient characterization of correlated readout, and also a more practical ``measure first, ask questions later" approach, potentially by using covering arrays \cite{Kiara2024}. The correlation coefficients perform well as an indicator for correlated readout, but have certain drawbacks. When the strength of correlations within the system is large, it can assign correlation coefficients to uncorrelated qubits that are of the same order of magnitude as other weakly correlated qubits. These incorrect correlation coefficients can cause weakly correlated qubits to be incorrectly assigned to other groups. Although these incorrect assignments lower the average performance of the correlated QREM protocol, it still shows a significant improvement over other candidates. Finding better approaches to quantify the correlated readout would improve the correlated QREM protocol and is a potential future direction. 

Looking forward, we will further explore the application of this protocol to large-scale experimental setups. Notably, in tasks such as evaluating correlation propagation in large quantum spin systems, we expect our protocol to yield a precision gain leading to new insights on nonequilibrium quantum dynamics.



\section*{Code and data availability}
The code used to generate and analyze the presented results can be found, together with a tutorial notebook, in the GitHub repository available at: \url{https://github.com/AdrianAasen/CREMST}. The data used for this project are available in the code repository with a notebook used to display the results. 

\section*{Author Contributions}
A.A. developed the protocol and software with the supervision of M.G. A.D., H.R. and A.U. provided the experimental characterization of the measurement device used for sample generation.  A.A. wrote the manuscript with contributions from A.D and M.G. All
authors contributed to the finalization of the manuscript.

\section*{Acknowledgments}
A.A. thanks K. Hansenne for fruitful discussions on overlapping tomography and covering arrays, N. Euler for suggesting a cluster verification approach, R. Dougherty for help with generating the perfect hash families, and M. D'Achille for technical assistance. Some python packages were instrumental in this project, Numpy \cite{harris2020}, Scipy \cite{Virtanen2020}, joblib \cite{joblib}, Matplotlib \cite{Hunter2007} and Quantikz \cite{Kay2018}. The authors acknowledge support by the state of Baden-Württemberg through bwHPC.
This research is supported by funding from the German Research Foundation (DFG) under the project identifier Grant No. 398816777-SFB 1375 (NOA).

 \appendix



\section{Limitation of the correlation coefficients}
\label{App:Limitation_correlation_coeff}
Although the correlation coefficients provide a useful proxy for readout-induced correlations, they are not a strict correlation measure. They can and do assign non-trivial correlation coefficients to uncorrelated qubits. The origin of these unphysical correlations lies in the definition of the correlation coefficients. As emphasized in Sec.~\ref{sec:correlation_coefficients}, defining a reduced POVM from a larger subsystem in the presence of correlated readout is a complex task. The starting point for the correlation coefficients are two-qubit POVMs which are directly reconstructed from measurements. One is therefore implicitly tracing out all the other qubits in the system. Recalling the definition of traced-out POVMs from Eq.~\eqref{eq:traced_out_POVM}, constructing reduced POVMs directly assumes the reference state as the maximally mixed state $\rho_\text{B} =  \mathbb{1}/2^{N-2}$. This is because the calibration states on the traced subsystem are an equal mixture of all possible symmetric and informationally complete calibration states, see Eq.~\eqref{eq:QDT_calibration_states}, leading to reduced POVM,

\begin{equation}
    M^{\text{A}}_{i_\text{A}} = \frac{1}{2^{N-2}}\sum_{i_\text{B}} \Tr_\text{B}(M^{\text{AB}}_{i_\text{A}i_\text{B}}),
    \label{eq:traced_out_POVM_with_identity}
\end{equation}
where $N$ is total number of qubits in the system, and the reference state label is omitted on the reduced POVM.

This choice ignores the potential correlations that occur between the A and B subsystems. Therefore, if there were substantial correlations between these subsystems, the residual discrepancy would be attributed to the correlation between two qubits in subsystem A. This becomes apparent if there are groups of strongly correlated qubits and other groups of weakly correlated qubits which are mutually uncorrelated. The correlation coefficients between a qubit from the weakly correlated group and a qubit from the strongly correlated group can be larger than the correlation coefficients between two weakly correlated qubits. 
In effect, this causes an unphysical correlation to be assigned to pairs of qubits which may not share any correlation.



\section{Perfect hash families}
\label{App:perfect_hash_families}
The PHF $\Phi_{15,16,3}$ used for the 16-qubit simulations is presented in Table~\ref{tab:table_16_qubit}. In Ref.~\cite{Aasen2025} we provide a set of other PHF, specifically $\Phi_{36,16,4}, \Phi_{37,100,3}$, and $\Phi_{87,50,4}$.

\newcolumntype{C}{>{\centering\arraybackslash}X} 
\begin{table}[H]
    \centering
\begin{tabularx}{\textwidth/2}{l|*{16}{C}}
        \toprule
& \multicolumn{16}{c@{}}{$k$}\\
$N$  & 1 & 2 & 3 & 4 & 5 & 6 & 7 & 8 & 9 & 10 & 11 & 12 & 13 & 14 & 15 & 16 \\ 
\hline
1 & 0 & 1 & 2 & 0 & 1 & 2 & 0 & 1 & 2 & 0 & 1 & 2 & 0 & 1 & 2 & 0 \\ 
2 & 0 & 1 & 2 & 2 & 1 & 0 & 1 & 0 & 2 & 2 & 1 & 0 & 1 & 2 & 0 & 1 \\ 
3 & 0 & 1 & 2 & 2 & 0 & 1 & 2 & 1 & 0 & 1 & 2 & 0 & 0 & 2 & 1 & 1 \\ 
4 & 0 & 1 & 2 & 2 & 0 & 1 & 2 & 0 & 1 & 2 & 0 & 2 & 1 & 0 & 2 & 0 \\ 
5 & 0 & 1 & 2 & 1 & 0 & 2 & 0 & 2 & 1 & 2 & 1 & 0 & 1 & 0 & 1 & 2 \\ 
6 & 0 & 1 & 2 & 1 & 0 & 2 & 0 & 2 & 1 & 0 & 2 & 1 & 2 & 1 & 0 & 1 \\ 
7 & 0 & 1 & 2 & 2 & 1 & 0 & 2 & 1 & 0 & 1 & 2 & 0 & 0 & 0 & 0 & 0 \\ 
8 & 0 & 1 & 2 & 2 & 1 & 0 & 2 & 2 & 1 & 0 & 0 & 1 & 2 & 1 & 2 & 0 \\ 
9 & 0 & 1 & 2 & 1 & 0 & 2 & 2 & 0 & 0 & 1 & 2 & 2 & 0 & 1 & 1 & 0 \\ 
10 & 0 & 1 & 2 & 2 & 1 & 0 & 1 & 2 & 0 & 1 & 2 & 1 & 1 & 2 & 2 & 2 \\ 
11 & 0 & 1 & 2 & 0 & 2 & 0 & 0 & 0 & 1 & 1 & 0 & 0 & 2 & 1 & 1 & 0 \\ 
12 & 0 & 0 & 1 & 2 & 1 & 1 & 1 & 2 & 0 & 0 & 1 & 1 & 2 & 2 & 1 & 1 \\ 
13 & 0 & 0 & 1 & 2 & 1 & 0 & 2 & 2 & 2 & 1 & 1 & 1 & 2 & 2 & 2 & 0 \\ 
14 & 0 & 0 & 0 & 0 & 1 & 1 & 1 & 2 & 2 & 2 & 2 & 2 & 2 & 2 & 0 & 0 \\ 
15 & 0 & 0 & 0 & 0 & 0 & 0 & 0 & 0 & 1 & 1 & 1 & 1 & 1 & 2 & 2 & 2 \\
\bottomrule
\end{tabularx}
\caption{Perfect hash family $\Phi_{15,16,3}$. This perfect hash family was used to generate all samples used for the QDT step in Secs.~\ref{sec:16-qubit_example} and \ref{sec:strong_artifical_noise}. It was generated by using the density algorithm presented in Ref.~\cite{dougherty2019}.}
\label{tab:table_16_qubit}
\end{table}

\section{Hierarchical clustering algorithm and limitations}
\label{App:Hierarchical_clustering}

In this work, we used hierarchical agglomerative clustering algorithms to group correlated qubits. We will only give a brief and qualitative description of the clustering algorithm, for more details see Ref.~\cite{Muller2011}. In simple terms, a hierarchical agglomerative clustering algorithm uses the mutual distance between all data points to procedurally combine the data points with the smallest pairwise distance into clusters. This process continues until a singular cluster is achieved that includes all data points. For our purposes the data points are the qubits and the mutual distances are derived from the symmetric correlation coefficients, see Sec.~\ref{sec:correlation_coefficients}, and are defined as
\begin{equation}
    d^2_{ij} = 1 - c_{i \leftrightarrow j}.
    \label{eq:clustering_distance}
\end{equation}
This gives a distance where $d=1$ means completely uncorrelated and $d=0$ means maximally correlated. To limit how large the clusters can grow, a threshold value $\gamma$ can be chosen such that no clusters with intra-cluster distance larger than $\gamma$ will be grouped together. The threshold value is upper bounded by 1 and should be set at least below the expected statistical fluctuations $\gamma \leq 1- 1/\sqrt{n_\text{calib}}$, where $n_\text{calib}$ is the number of measurements used for each calibration state. If the readout noise is large, one may in addition manually tune this threshold if there are obvious separations of correlator scale. Examples of tuned and untuned dendrograms are shown in Fig.~\ref{fig:dendrogram_tuning}. A dendrogram is a visual representation of the clustering process. Read from the bottom upward, each qubit starts in their own node, indicated by each qubit index having it's own branch. Moving upward, the nodes are joined by horizontal lines at their characteristic intercluster distance. The threshold value indicated at what distance between clusters the clustering stops, and the noise clusters can be read off by following each branch downward. In Fig.~\ref{fig:dendrogram_tuning} each noise cluster is given a distinct color. If noise clusters exceed the maximum size $n_{\text{corr}}$ after applying the threshold, as shown in the top plot, they are iteratively split by following dendrogram branches downward to the first intersection point, dividing the cluster into two smaller ones. This does not affect the threshold for other clusters.

\begin{figure}
  \centering
  \includegraphics[width=0.95\linewidth]{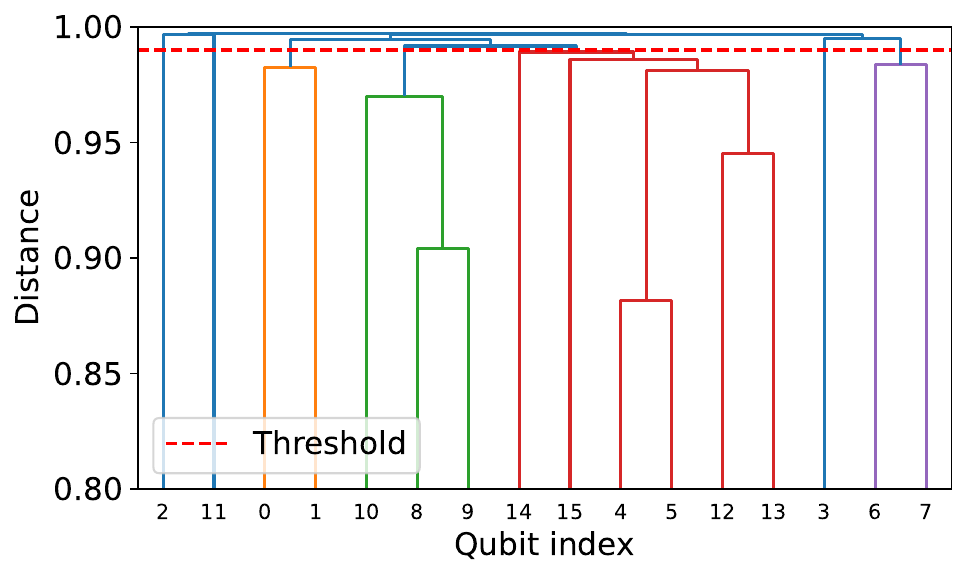}  \\
  \includegraphics[width=0.95\linewidth]{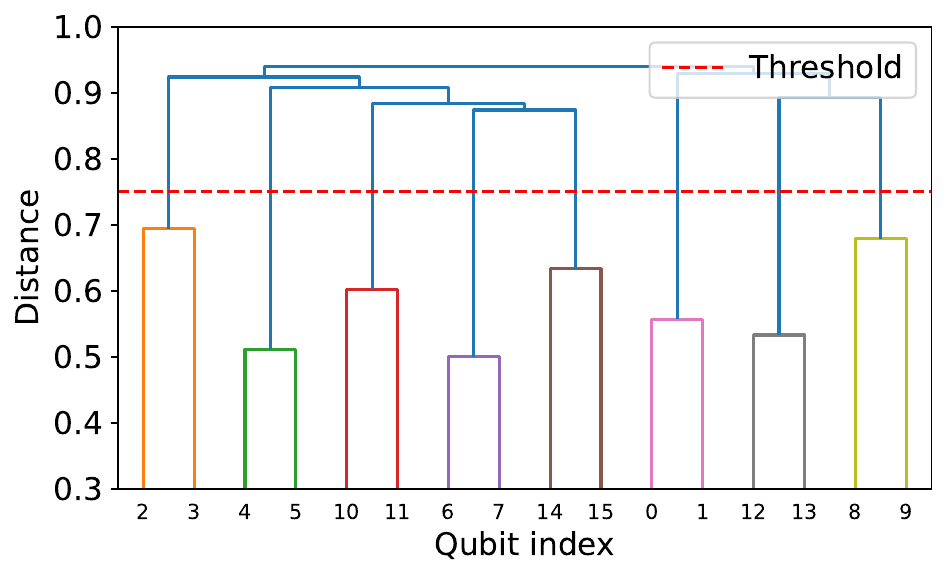} 
\caption{Examples of automatic and manual dendrogram tuning. Distance refers to $d^2$. \textbf{Top:} Dendrogram from Sec.~\ref{sec:16-qubit_example}. No tuning was performed and the automatic threshold $\gamma=1-1/\sqrt{n_\text{calib}}$ was used. The largest cluster was iteratively split into $\{(14), (15), (4,5), (12,13)\}$. \textbf{Bottom:} Dendrogram from Sec.~\ref{sec:strong_correlation} with $k=0.4$. Manual threshold was set due to clear separation of correlation structure. The characteristic value of the correlation coefficients for this noise model are comparable with the worst qubit correlations observed in Ref.~\cite{Tuziemski2023} (IBM Cusco $d^2 \approx 0.7$ and Rigetti Aspen-M-3 $d^2\approx 0.6$.).}
\label{fig:dendrogram_tuning}
\end{figure}

The numerical implementation used the SciPy hierarchy package with the ``complete" linkage method \cite{Virtanen2020}. The ``complete" linkage method groups together clusters by minimizing the largest distance between any of the qubits in the new joined cluster. In Sec.~\ref{sec:strong_coherent}, the ``average'' linkage was used instead due to a limitation of the correlation coefficients. The average linkage method instead groups together qubits based on the lowest average intercluster distances. We also found very good error-mitigation performance using the ``ward" method which generally produced dendrograms where it was easier to visually see the separation of correlation scale when correlations were large.  However, we advise against its use, as the Ward method is only guaranteed to work with Euclidean distances \cite{Muller2011}, while the distance described in Eq.~\eqref{eq:clustering_distance} is non-Euclidean.

\subsection{Limitation on the coherent error model} 
\label{app:Limitation_coherent_error}
The coherent error model used in Sec.~\ref{sec:strong_coherent} used the ``average'' linkage method, as opposed to the ``complete'' linkage method used for all other simulations. This choice stems from the way the noise model creates correlations among qubits within three- and four-qubit noise clusters. To demonstrate this problem, we will specifically analyze the case of three qubits.

The error unitary for a three-qubit system reads as
\begin{equation}
\begin{split}
        R^3_x(\theta) =& e^{-i (X\otimes X \otimes \mathbb{1}_2 + \mathbb{1}_2 \otimes X \otimes X) \theta/2}\\
        =& \cos^2\left(\tfrac{\theta}{2}\right) \mathbb{1}_8 + \sin^2\left(\tfrac{\theta}{2}\right) X \otimes \mathbb{1}_2 \otimes X \\
        & - i \cos\left(\tfrac{\theta}{2}\right) \sin\left(\tfrac{\theta}{2}\right) (X\otimes X \otimes \mathbb{1}_2 + \mathbb{1}_2 \otimes X \otimes X),
\end{split}
\end{equation}
where a matrix exponential expansion was performed. The second term, which correlates the first and third qubits, is suppressed by a factor $ \sin^2\left(\tfrac{\theta}{2}\right)$, which, at small $\theta$, makes the correlation coefficients for these two qubits small, while the correlation between the first and second and the second and third remain large. This causes issues using the ``complete'' linkage method, which optimizes the smallest largest distance between all qubits in the clusters that should be joined. Therefore, if two qubits have very weak correlations, they are very unlikely to be combined into a noise cluster, even though they are very correlated to a mutual qubit. This, combined with the drawbacks of the correlation coefficients discussed in Appendix~\ref{App:Limitation_correlation_coeff}, makes it very unlikely that all three qubits are clustered together. A solution to this was to change the linkage method to ``average'' which instead optimizes for the smallest average distance between qubits. 

\section{Necessity of connected noise clusters}
 \label{App:proving_obs_12}

This appendix demonstrates that the correlated QREM protocol, outlined in Sec.~\ref{sec:method}, is the minimal procedure required to mitigate both correlated and coherent readout errors. To begin with, it is convenient to introduce a practical representation for the readout error-mitigated quantum states. 
When it is necessary to denote the subsystem to which operators belong, superscripts with capital letters are used.
 
 \subsection{Representing the readout error-mitigated state}
A density matrix can be represented as a linear combination of any complete basis of Hermitian operators. An informationally complete POVM forms such a basis. Using the index sum notation, the density matrix can be decomposed as
 \begin{equation}
     \rho = c_i M_i,
     \label{eq:density_matix_representation}
 \end{equation}
 where $c_i$ is some complex coefficient.  An expression for $c_i$ can be found by using the probability of receiving outcome $i$ from a measurement,
\begin{equation}
    p_i = \Tr(M_i \rho) = \Tr(M_i M_j) c_j = T_{ij} c_j,
    \label{eq:transfer_matrix_def}
\end{equation}
where we have defined the overlap matrix $T_{ij} = \Tr(M_i M_j)$. Solving Eq.~\eqref{eq:transfer_matrix_def} for $c$ and inserting it into Eq.~\eqref{eq:density_matix_representation} one finds a linear inversion representation of the density matrix
\begin{equation}
\label{eq:density_matrix_POVM_rep}
    \rho = p_i \left(T^{-1}\right)_{ij} M_j.
\end{equation}
This representation is convenient as it directly relates the quantum state to the measurement operators and the measurement outcomes $p_i = n_i/N$, where $n_i$ is the recorded number outcome $i$, and $N$ is the total number of measurements. 

If $N$ measurements were performed by a noisy measurement device $\tilde{\mathbf{M}}$, a linear inversion estimator for the prepared state would be 
\begin{equation}
    \hat \rho = \frac{n_i}{N}\left(\tilde{T}^{-1}\right)_{ij} \tilde{M}_j.
    \label{eq:error_mitigated_state}
\end{equation}
Therefore, if the noisy POVM $\tilde{\mathbf{M}}$ is known, Eq.~\eqref{eq:error_mitigated_state} is an estimator of the readout error-mitigated state. 
In the limit of infinite measurements, linear inversion aligns with both the maximum likelihood estimator and the Bayesian mean estimator, which are used in our protocols. Therefore, in the remainder of this section, we assume $n_i/N \rightarrow p_i = \Tr(\rho \tilde{M}_i)$.



\subsection{Minimality of correlated QREM protocol}

The necessity of the correlated QREM protocol can be phrased as an observation, 
\begin{restatable}{observation}{entangled}
    If no assumptions about the prepared state can be made, then readout error mitigation must be performed on the connected noise cluster. 
    \label{theorem:entangled_QREM}
\end{restatable}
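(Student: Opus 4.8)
\emph{Proof strategy.} The plan is to prove the statement in two parts: \emph{sufficiency}, that readout error mitigation carried out on the whole connected noise cluster $C$ succeeds for an arbitrary input state, and \emph{necessity}, that no mitigation procedure restricted to a proper subsystem of $C$ can do so. Sufficiency follows almost directly from the linear-inversion representation~\eqref{eq:error_mitigated_state} once one invokes the defining property of a connected noise cluster, namely that the implemented POVM factorizes across the cut $C\,|\,\bar C$ as $\tilde{\mathbf M}=\tilde{\mathbf M}^{C}\otimes\tilde{\mathbf M}^{\bar C}$. Necessity is the substantive part, and I would establish it by an information-theoretic obstruction made concrete with a minimal counterexample.

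For sufficiency, let $O=A^{q_1}\otimes B^{q_2}$ with $\{q_1,q_2\}\subseteq C$. Because the POVM factorizes across $C\,|\,\bar C$, tracing out $\bar C$ is unambiguous---in contrast to the generic situation discussed around Eq.~\eqref{eq:traced_out_POVM}, no reference state is needed---so the outcome statistics restricted to $C$ depend only on $\rho_C=\Tr_{\bar C}(\rho)$ and on the well-defined cluster POVM $\tilde{\mathbf M}^{C}$, which overlapping detector tomography reconstructs and which is informationally complete on $C$. Applying Eq.~\eqref{eq:error_mitigated_state} inside $C$ then gives $\hat\rho_C\to\rho_C$ in the infinite-sample limit, hence $\Tr(O\hat\rho_C)\to\Tr(O\rho_C)=\langle O\rangle_\rho$ for every input state; this is precisely REMST on the connected noise cluster.

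For necessity, suppose a procedure uses only the records collected on a proper subset $S\subsetneq C$ with $\{q_1,q_2\}\subseteq S$, i.e., it feeds some (possibly noise-aware) estimator with the marginals $p_s=\sum_{r}\Tr(\rho_C\,\tilde M^{C}_{sr})=\Tr(\rho_C\,\tilde M^{S}_s)$, where $\tilde M^{S}_s:=\sum_r\tilde M^{C}_{sr}$ and $r$ runs over outcomes on $C\setminus S$. Since the noise on $C$ is not separable across the $S\,|\,C\setminus S$ cut---which is exactly why $C$ is one cluster rather than two, cf.\ the correlation coefficient~\eqref{eq:correlation_coefficients}---the operator $\tilde M^{S}_s$ is generically not of the form $(\cdot)^{S}\otimes\mathbb 1^{C\setminus S}$, so $p_s$ depends on the full $\rho_C$, not merely on $\rho_S$. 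I would then exhibit the minimal instance $C=\{q_1,q_2,q_3\}$ with a single-bond coherent correlated channel $\mathcal E(M)=R\,M\,R^{\dagger}$, $R=\exp(-i\theta\,X^{q_2}\!\otimes X^{q_3}/2)$ with $0<\theta<\pi$, acting on the ideal computational-basis POVM (a single-bond instance of the model in Eq.~\eqref{eq:coherent_error}), together with the global states $\rho=\sigma^{q_1q_2}\otimes\ketbra{+}{+}^{q_3}$ and $\rho'=\sigma^{q_1q_2}\otimes\ketbra{-}{-}^{q_3}$, where $\sigma$ is chosen with a nonzero $Y^{q_2}$-component (e.g.\ $q_2$ in the $+1$ eigenstate of $Y$). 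These share the reduced state $\rho_{q_1q_2}=\rho'_{q_1q_2}=\sigma$, so $\langle O\rangle_\rho=\langle O\rangle_{\rho'}$ for every $O$ on $q_1,q_2$; but a short computation shows $\tilde M^{S}_s$ contains a $Y^{q_2}\otimes X^{q_3}$ term, so the two-qubit marginals $p_s$ on $S=\{q_1,q_2\}$ differ between $\rho$ and $\rho'$ (the $XX$ rotation coherently couples the $q_2$ outcome to the $X$-expectation of $q_3$). Any estimator on $S$ is a fixed function of this already-distinct data, hence cannot return the common correct value for both states; embedding the example shows the same for every $S\subsetneq C$, forcing the mitigation to act on all of $C$.

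The main obstacle is the necessity direction, and specifically ruling out \emph{every} conceivable ``smaller'' procedure rather than just the naive traced-out-POVM scheme of Eq.~\eqref{eq:traced_out_POVM} with a fixed reference state. The clean way around this is the framing above: show that the data available to any sub-cluster procedure is already a non-injective function of $\rho_C$ at the level of the target observable, so no downstream map---linear, maximum-likelihood, or otherwise---can repair it. A secondary subtlety worth making explicit is which ingredients are genuinely needed: a coherent but uncorrelated channel gives a product POVM for which the cluster collapses to $\{q_1,q_2\}$, and a purely classical but correlated channel still requires cluster-wide processing yet only confusion-matrix inversion rather than full state tomography---so it is the \emph{combination} of correlation and coherence that makes connected-cluster \emph{state reconstruction} the minimal procedure, as claimed. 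Pinning down the counterexample's key inequality (equal target values, distinct marginals) is a brief but careful calculation with Eqs.~\eqref{eq:traced_out_POVM} and~\eqref{eq:error_mitigated_state}; the remainder is bookkeeping.
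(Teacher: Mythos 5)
Your necessity argument — which you correctly identify as the substantive part — has its logic inverted, and this is a genuine gap. To show that no estimator acting only on the data from a proper subset $S\subsetneq C$ can be correct for all states, you must exhibit two states $\rho_C$, $\rho_C'$ that produce \emph{identical} outcome statistics $p_s=\Tr(\rho_C\,\tilde M^{S}_s)$ on $S$ (for every measurement setting available to the procedure) yet have \emph{different} values of the target observable $\Tr(O\rho_S)$; then a fixed function of the data is forced to output the same number for both and is wrong for at least one. Your counterexample does the opposite: the two states $\sigma\otimes\ketbra{+}{+}$ and $\sigma\otimes\ketbra{-}{-}$ have the \emph{same} correct answer but \emph{different} marginal data. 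A fixed function is under no obligation to be injective, so nothing prevents an estimator from mapping both (distinct) data sets to the common correct value — indeed the full-cluster procedure does exactly that. As written, the sentence ``any estimator on $S$ is a fixed function of this already-distinct data, hence cannot return the common correct value for both states'' is a non sequitur, and the example establishes at most that a \emph{specific} estimator (e.g.\ linear inversion against a fixed reduced POVM, which would generically return different values for the two data sets) is inconsistent — and even that requires checking that the data difference propagates into the $O$-component of the reconstruction. The repair is to show that the linear functional $\rho_C\mapsto\Tr\bigl((O\otimes\mathbb{1}^{C\setminus S})\rho_C\bigr)$ does not lie in the span of the marginalized effects $\{\tilde M^{S,(b)}_s\}$, i.e.\ to produce a traceless $\Delta$ with $\Tr(\Delta\,\tilde M^{S,(b)}_s)=0$ for all $s,b$ but $\Tr\bigl(\Delta\,(O\otimes\mathbb{1})\bigr)\neq 0$; your $XX$-rotation channel is a reasonable place to look for such a $\Delta$, but you have not constructed it.

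For comparison, the paper argues differently and more modestly: it writes the traced-down full-cluster reconstruction $\Tr_{\mathrm C}(\rho^{\mathrm{ABC}})$ and the direct sub-cluster reconstruction $\rho^{\mathrm{AB}|\mathrm C}$ in the same operator basis $M^{\mathrm A}_m\otimes M^{\mathrm B|\mathrm C}_{no}$ and observes that they carry different overlap matrices, $T^{\mathrm{AB}|\mathrm C}\neq T^{\mathrm{ABC}}$ because $\Tr_{\mathrm{AB}}(QR)\neq\Tr_{\mathrm A}(\Tr_{\mathrm B}Q\,\Tr_{\mathrm B}R)$, so the two linear-inversion estimators disagree. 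That argument targets the concrete REMST/linear-inversion pipeline rather than all conceivable procedures, which is why it gets away without the indistinguishability construction your stronger claim requires. Your sufficiency paragraph and your closing remark about which ingredients (correlation plus coherence) make full state reconstruction necessary are both sound and consistent with the paper.
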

\noindent This observation states that an error-mitigated state can generally not be reduced beyond the connected noise clusters of an observable without losing accuracy. We will show this by considering a simplified three-qubit case illustrated in Fig.~\ref{fig:entangled_state_sketch}. The noisy correlated readout is described by two error channels $\mathcal{E}_1$ and $\mathcal{E}_2$ which operate on a one-qubit and two-qubit Hilbert space, respectively, and the prepared state is a general three-qubit state.  
The process can be written as follows,
\begin{equation}
\begin{split}
        \ket{000} \rightarrow & \rho^{\text{ABC}} =  U\ket{000}\bra{000}  U^\dagger \\
        \rightarrow  & \mathcal{E}_1\otimes\mathcal{E}_2(\rho^{\text{ABC}}).
\end{split}
\end{equation}

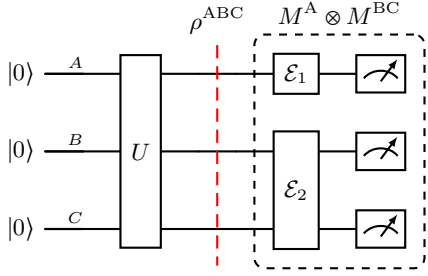
\begin{figure}
    \centering
    \begin{quantikz}
        \lstick{$\ket{0}$}&\wire[l][1]["A"{above,pos=0.2}]{a}&\gate[3]{U}&\slice{$\rho^{\text{ABC}}$}&&\gate{\mathcal{E}_1}\gategroup[3, steps=2,style={dashed,rounded
    corners}]{$M^\text{A}\otimes M^\text{BC}$}&\meter{}\\
        \lstick{$\ket{0}$}&\wire[l][1]["B"{above,pos=0.2}]{a}&&&&\gate[2]{\mathcal{E}_2}&\meter{}\\
        \lstick{$\ket{0}$}&\wire[l][1]["C"{above,pos=0.2}]{a}&&&&&\meter{}
    \end{quantikz}
    \caption{Example quantum circuit with a prepared general three-qubit sate between correlated qubit readout. }
     \label{fig:entangled_state_sketch}
\end{figure}


We want to show that if a state is sampled by a noisy POVM $\mathbf{M} = \{M^{\text{A}} \otimes M^{\text{BC}}\}$ (described by $\mathcal{E}_1$ and  $\mathcal{E}_2$), we will not obtain the correct readout error-mitigated state $\rho^{\text{AB}}$ if we do not first reconstruct the full state $\rho^{\text{ABC}}$, which spans the full connected noise cluster. We do this by comparing the reconstruction of $\Tr_C(\rho^{\text{ABC}})$ to a reconstruction of $\rho^{\text{AB}}$ that is aware of the measurement outcomes on the C subsystem.

It is useful to define a reduced measurement operator which retains the information about the outcome on the traced-out subsystem as 
\begin{equation}
 \Tr_\text{B}(M^\text{AB}_{ij}) = M^{\text{A}|\text{B}}_{ij},
\end{equation}
which is the measurement operator for outcome $i$ on subsystem A conditioned on outcome $j$ occurring on subsystem B.

These conditioned measurement operators occur naturally when the C subsystem is traced out from the fully reconstructed state on ABC from Fig.~\ref{fig:entangled_state_sketch}. To see this consider the full $\rho^{\text{ABC}}$,

 \begin{equation}
\label{eq:error_mititgated_density_matrix}
    \rho^{\text{ABC}} = p^\text{ABC}_{ijk}  \left({T^\text{ABC}}^{-1} \right)_{ijkmno} M^{\text{A}}_m \otimes M^\text{BC}_{no},
\end{equation}
 and then trace out the subsystem C,
\begin{equation}
\begin{split}
       \Tr_\text{C}(\rho^{\text{ABC}}) =& p^\text{ABC}_{ijk}  \left({T^\text{ABC}}^{-1} \right)_{ijkmno} M^{\text{A}}_m \otimes \Tr_\text{C}(M^{\text{BC}}_{no})\\
       =& p^\text{ABC}_{ijk}  \left({T^\text{ABC}}^{-1} \right)_{ijkmno} M^{\text{A}}_m \otimes M^{\text{B}|\text{C}}_{no}.
\label{eq:traced_down_full_QREM_state}
\end{split}
\end{equation}



If one instead starts directly with the conditioned measurement operators $\mathbf{M} = \{M^{\text{A}}\otimes M^{\text{B}|\text{C}} \}$, which primarily operates in the AB subsystem while keeping track of the outcomes that occurred in the environment, one has
\begin{equation}
    \rho^{\text{AB}|\text{C}} = p^\text{ABC}_{ijk}  \left({T^{\text{AB}|\text{C}}}^{-1} \right)_{ijkmno} M^{\text{A}}_m \otimes M^{\text{B}|\text{C}}_{no}.
    \label{eq:POVM_reduced_state}
\end{equation}
The only difference between Eqs.~\eqref{eq:traced_down_full_QREM_state} and \eqref{eq:POVM_reduced_state} is the inverse overlap matrix, which in general
\begin{equation}
    \begin{split}
        T^{\text{AB}|\text{C}}_{ijkmno} =& \Tr_\text{AB}\left((M^\text{A}_i \otimes M^{\text{B}|\text{C}}_{jk}) (M^\text{A}_m \otimes M^{\text{B}|\text{C}}_{no})\right)\\ \neq & T^{\text{ABC}}_{ijkmno} = \Tr_\text{ABC}\left((M^\text{A}_i \otimes M^{\text{BC}}_{jk}) (M^\text{A}_m \otimes M^{\text{BC}}_{no})\right)
    \end{split}
\end{equation} 
because, for general matrices $Q$ and $R$
\begin{equation}
    \Tr_{\text{AB}}(QR) \neq \Tr_{\text{A}}(\Tr_\text{B}(Q)\Tr_\text{B}(R)).
\end{equation}
Since the matrices are not equal, their inverse cannot be equal. 
Therefore, one cannot recover the reduced error-mitigated state if one does not consider the full state between the two error clusters.

Observation~\ref{theorem:entangled_QREM} can be understood as a worst-case state reconstruction requirement to get any error-mitigated observable. That is, if the largest cluster size allowed is $n_\text{corr}$ qubits, then the largest state that might need to be reconstructed for a two-qubit observables is a $2n_\text{corr}$-qubit state.

\subsection{Product state assumption}
Significant simplifications are possible if some structure is known about the prepared state. In particular, if the prepared state is a product between noise clusters, readout error mitigation can be performed on the noise clusters separately, which means that no connected noise clusters are necessary.

To show this, we need the following property of the overlap matrix. 
\begin{prop}
If a POVM is a product of two smaller POVMs, $\mathbf{M} = \{M^\text{A} \otimes M^\text{B}\}$, then
\begin{equation}
     \left({T^\text{AB}}^{-1}\right)_{ijmn} =  \left({T^\text{A}}^{-1}\right)_{im}  \left({T^\text{B}}^{-1}\right)_{jn}.
     \label{eq:separability_of_transfer_matrix}
\end{equation}
\end{prop}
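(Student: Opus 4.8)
The plan is to prove $\left({T^\text{AB}}^{-1}\right)_{ijmn} = \left({T^\text{A}}^{-1}\right)_{im}\left({T^\text{B}}^{-1}\right)_{jn}$ by showing that the right-hand side is indeed the inverse of $T^\text{AB}$, which by uniqueness of matrix inverses suffices. First I would recall the definition $T^\text{AB}_{ij,mn} = \Tr\!\left((M^\text{A}_i \otimes M^\text{B}_j)(M^\text{A}_m \otimes M^\text{B}_n)\right)$, where the composite index $(ij)$ labels rows and $(mn)$ labels columns. The central observation is that the trace over the tensor-product Hilbert space factorizes: $\Tr\!\left((M^\text{A}_i M^\text{A}_m)\otimes(M^\text{B}_j M^\text{B}_n)\right) = \Tr_\text{A}(M^\text{A}_i M^\text{A}_m)\,\Tr_\text{B}(M^\text{B}_j M^\text{B}_n) = T^\text{A}_{im}\,T^\text{B}_{jn}$. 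Hence $T^\text{AB}$ is exactly the Kronecker product $T^\text{A}\otimes T^\text{B}$ when the index pairs are ordered consistently.

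Once $T^\text{AB} = T^\text{A}\otimes T^\text{B}$ is established, the result follows from the standard identity $(P\otimes Q)^{-1} = P^{-1}\otimes Q^{-1}$, valid whenever $P$ and $Q$ are invertible. I would note that invertibility of $T^\text{A}$ and $T^\text{B}$ is guaranteed by the assumption (implicit in this part of the paper) that the single-subsystem POVMs are informationally complete, so that the overlap/Gram matrices of the $M^\text{A}_i$ and $M^\text{B}_j$ are nonsingular; consequently $T^\text{AB}$ is invertible as well. Writing the Kronecker inverse identity back in component form with the composite indices $(ij)$ and $(mn)$ gives precisely Eq.~\eqref{eq:separability_of_transfer_matrix}. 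For completeness I would also verify the identity directly by contraction: $\sum_{mn} T^\text{AB}_{ij,mn}\left({T^\text{A}}^{-1}\right)_{mp}\left({T^\text{B}}^{-1}\right)_{nq} = \sum_m T^\text{A}_{im}\left({T^\text{A}}^{-1}\right)_{mp}\sum_n T^\text{B}_{jn}\left({T^\text{B}}^{-1}\right)_{nq} = \delta_{ip}\delta_{jq}$, which is the $(ij),(pq)$ entry of the identity, confirming the claim without invoking the Kronecker lemma as a black box.

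The only mild subtlety—not a real obstacle—is bookkeeping: one must fix a consistent flattening of the pair index $(ij)$ into a single running index so that "$T^\text{A}\otimes T^\text{B}$" literally means the Kronecker product and the inverse identity applies verbatim. The direct contraction argument sidesteps even this, since it never requires choosing an ordering. I expect the proof to be short; the substance is entirely in the factorization of the trace over the bipartite Hilbert space, and everything else is linear algebra. This Proposition will then feed into the subsequent argument that, under a product-state assumption across noise clusters, Eq.~\eqref{eq:error_mitigated_state} factorizes cluster-by-cluster, so that no connected noise clusters are needed—contrasting with Observation~\ref{theorem:entangled_QREM}.
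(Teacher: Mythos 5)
Your proposal is correct and follows essentially the same route as the paper: factorize the trace over the bipartite Hilbert space to get $T^{\text{AB}}_{ijmn} = T^{\text{A}}_{im} T^{\text{B}}_{jn}$, then verify by direct contraction that the product of the single-subsystem inverses is the inverse of $T^{\text{AB}}$. Your added remarks on invertibility via informational completeness and on the Kronecker-product bookkeeping are sensible but not needed beyond what the paper's two-line argument already contains.
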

\begin{proof}

The overlap matrix for a product POVM can be written out explicitly,
\begin{equation}
\begin{split}
T^{\text{AB}}_{ijmn} =& \Tr\left((M^\text{A}_{i} \otimes  M^\text{B}_{j}) (M^\text{A}_{m} \otimes  M^\text{B}_{n})\right)\\
=&\Tr(M_i^\text{A}M_m^\text{A})\Tr(M_j^\text{B}M_n^{\text{B}})\\
=&T_{im}^\text{A}T_{jn}^\text{B}.
\end{split}
\end{equation}
From this one can then directly conclude that 
\begin{equation}
   T^{\text{AB}}_{ijmn}\left({T^\text{A}}^{-1}\right)_{mk}  \left({T^\text{B}}^{-1}\right)_{nl} = \mathbb{1}_{ijkl}.
\end{equation}
\end{proof}



\begin{figure}
    \centering
    \begin{quantikz}
        \lstick{$\ket{0}$}&\wire[l][1]["A"{above,pos=0.2}]{a}&\gate{U_1}&\slice{$\rho^{\text{ABC}}$}&&\gate{\mathcal{E}_1}\gategroup[3, steps=2,style={dashed,rounded
    corners}]{$M^\text{A}\otimes M^\text{BC}$}&\meter{}\\
        \lstick{$\ket{0}$}&\wire[l][1]["B"{above,pos=0.2}]{a}&\gate[2]{U_2}&&&\gate[2]{\mathcal{E}_2}&\meter{}\\
        \lstick{$\ket{0}$}&\wire[l][1]["C"{above,pos=0.2}]{a}&&&&&\meter{}
    \end{quantikz}
    \caption{Example quantum circuit with a product state between correlated qubit readout. }
     \label{fig:factorized_state_sketch}
\end{figure}
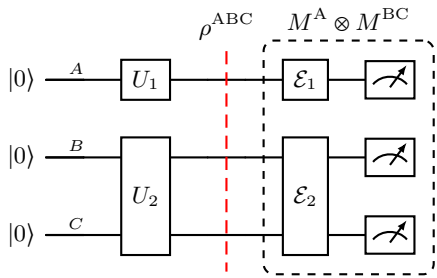

Consider the simplified three-qubit example in Fig.~\ref{fig:factorized_state_sketch}. It is known that the prepared state is a product state between the noise clusters. We want to show that reconstructing the states for each noise cluster individually, $\rho^\text{A} \otimes \rho^\text{BC}$, is equivalent to reconstructing the whole $\rho^\text{ABC}$. 

The two readout error-mitigated states can be written as follows,
\begin{equation}
\begin{split}
    \rho^{\text{A}} =& p_i^\text{A} \left({T^\text{A}}^{-1} \right)_{im} M_m^\text{A},\\
    \rho^{\text{BC}} =& p_{jk}^\text{BC} \left({T^\text{BC}}^{-1} \right)_{jkno} M_{no}^\text{A}.
\end{split}
\end{equation}
Using Eq.~\eqref{eq:separability_of_transfer_matrix}, we can combine the two 
\begin{equation}     
     \rho^\text{A} \otimes \rho^\text{BC} = p^\text{A}_{i}  p^\text{BC}_{jk}\left({T^\text{ABC}}^{-1} \right)_{ijkmno} M^{\text{A}}_m \otimes M^{\text{BC}}_{no}.
\end{equation}
Since the prepared state is a product state of subsystems A and BC, $p^\text{A}_{i}  p^\text{BC}_{jk} =  p^\text{ABC}_{ijk}$. Thus, $\rho^\text{A} \otimes \rho^\text{BC} = \rho^\text{ABC}$.

\section{Details on the protocol}
\label{App:protocol}
In this appendix, we provide implementation details on various parts of the protocol, including measurement simulations and POVM reconstruction. In addition, a pseudocode representation of the correlated QREM protocol in Fig.~\ref{fig:protocol} is provided.

\subsection{Simulating measurements}
The process of simulating the measurements draws inspiration from the way measurements are performed on superconducting qubit chips. The native measurement is the computational basis measurement. In order to achieve an informationally complete (IC) measurement for a single qubit, the qubit must be rotated into the $x$ basis and $y$ basis, as well as left undisturbed. This process effectively corresponds to the computational measurements performed along the $x, y,$ and $z$ axes, respectively. The natural way to simulate readout noise is to have it act only in the computational basis readout, and not include rotation gate errors. This closely resembles superconducting qubits, as individual qubit gates can be executed with greater precision than reading out the computational basis. Simulating measurements in this way simplifies the mitigation step, as performing detector tomography on the computational basis is sufficient. The mitigated IC POVM is created by applying ideal rotations to the mitigated computational basis POVM. Note that this does not mean that only classical errors are simulated, each basis is characterized and simulated with coherent errors.  

Due to noncommutativity of unitary operators, the application of noise channels such as the one in Sec.~\ref{sec:strong_coherent} to the computational basis does not yield the same IC POVM as applying the noise channel directly to an IC POVM such as the Pauli-6 POVM.

\subsection{State reconstruction}

Within the REMST protocol, quantum state tomography (QST) is essential to mitigate coherent errors. In the correlated QREM protocol, QST is performed for every connected noise cluster. The states are reconstructed using an iterative maximum-likelihood algorithm \cite{Lvovsky2004}. The likelihood function used for the connected noise clusters is defined as
\begin{equation}
    \mathcal{L}(\rho) \propto \Pi_i \Tr\left(\rho M^C_i\right)^{n_i},
\end{equation}
where $n_i$ is the observed number of outcomes $i$ and $\mathbf{M}^C$ is the reconstructed connected noise-cluster POVM.

The primary limitation affecting the protocol's numerical runtime is quantum state reconstruction. The maximum state size that can be potentially reconstructed is a $2n_\text{corr}$-qubit state, whereas the largest POVM involves only $n_\text{corr}$ qubits. Consequently, reconstructing the POVM is much quicker than reconstructing the state. Thus, to determine an adequate maximum cluster size, the runtimes of QST were examined. Table~\ref{tab:QST_runtimes} displays the numerical runtimes for state reconstruction in three to eight qubits. These runtimes were recorded on a 10-core laptop with a 2.5-GHz clock frequency. It can be concluded that setting $n_\text{corr}=3$ results in satisfactory runtimes suitable for the benchmarking performed in this study. In the existing implementation, choosing $n_\text{corr}=4$ is possible, albeit challenging. Substantial literature on more refined state reconstruction approaches, see e.g.~Refs.~\cite{Shang2017, Hou2016}, suggests these could considerably accelerate reconstruction times compared to the current setup; however, they probably would not permit much larger cluster sizes.

\begin{table}[H]
    \centering
    \begin{tabular}{c|c}
      Number of qubits   & Approximate runtime \\
      \hline
        3 &  1 s\\
        4 &  30 s\\
        5 & 2.5 min\\
        6 &  15 min\\
        7 & 30 min \\
        8 & 14 h
    \end{tabular}
    \caption{Recorded runtimes for connected cluster reconstructions for various numbers of qubits in the cluster. These measurements were conducted on a laptop with 10 cores, using iterative MLE \cite{Lvovsky2004}. }
    \label{tab:QST_runtimes}
\end{table}

\subsection{Detector tomography}
In this work we follow the prescription given by Ref.~\cite{Fiurek2001}.
Detector tomography for a $n$-qubit system is performed by preparing and measuring an informationally complete set of calibration states, $\rho_s$, $M$ times and solving the set of equations
\begin{equation}
    \frac{m_{si}}{M} = \Tr(\rho_s M_i),
\end{equation}
where $m_{si}$ is the number of times outcome $i$ was observed when measuring calibration state $\rho_s$. The calibration states are all possible combinations of tensor products of the single-qubit calibration states $\{\ket{\psi_i}\bra{\psi_i}\}_{i=1}^4$. Explicitly, we use pure states pointing to the corners of a tetrahedron in the Bloch sphere,
\begin{equation}
\label{eq:QDT_calibration_states}
\begin{split}
        \ket{\psi_1} &= \ket{0},\\
    \ket{\psi_2} &= \frac{1}{\sqrt{3}} \ket{0} + \sqrt{\frac{2}{3}} \ket{1},\\
    \ket{\psi_3} &= \frac{1}{\sqrt{3}} \ket{0} + \sqrt{\frac{2}{3}} e^{i \tfrac{2\pi}{3}}\ket{1},\\
    \ket{\psi_4} &= \frac{1}{\sqrt{3}} \ket{0} + \sqrt{\frac{2}{3}} e^{i \tfrac{4\pi}{3}}\ket{1}.\\
\end{split}    
\end{equation}
The four calibration states form a set of symmetric and informationally complete rank-1 projectors for a single qubit.

In order to have a fair comparison between all protocols, we have opted to use 3-local PHF to generate the calibration states. Using a 3-local PHF guarantees that any three-qubit POVM can be reconstructed, which is the largest used for the results presented in this work. As an example of how the calibration states are generated, Table~\ref{tab:Hashes_calib_example} shows the calibration states written explicitly for the first four qubits using the first hash function of the 3-local PHF in Table~\ref{tab:table_16_qubit}. Each hash function assigns one of three possible values to each qubit. An informationally complete set of calibration states is prepared between the three groups of qubits, where each qubit with label $(0,1,2)$ is assigned $\{(\psi_1,\psi_1,\psi_2), (\psi_1,\psi_1,\psi_3), (\psi_1,\psi_2,\psi_1), \dots , (\psi_3,\psi_3,\psi_2)\}$. The cases where all qubits receive the same calibration states are removed from each hash function and measured at the end.
\setlength{\tabcolsep}{4pt}

\begin{table}
    \centering
    \begin{tabular}{c|ccccc}
    \toprule
       Qubit index  & 1 & 2 & 3 & 4 \\
        \hline
        Hash value & 0 & 1 & 2 & 0\\
        \hline
        1 & $\ket{1}$ & $\ket{0}$ & $\ket{0}$ & $\ket{1}$ \\
        2 & $\ket{2}$ & $\ket{0}$ & $\ket{0}$ & $\ket{2}$ \\
        3 & $\ket{3}$ & $\ket{0}$ & $\ket{0}$ & $\ket{3}$ \\
        4 & $\ket{0}$ & $\ket{1}$ & $\ket{0}$ & $\ket{0}$ \\
        5 & $\ket{1}$ & $\ket{1}$ & $\ket{0}$ & $\ket{1}$ \\
        6 & $\ket{2}$ & $\ket{1}$ & $\ket{0}$ & $\ket{2}$ \\
        7 & $\ket{3}$ & $\ket{1}$ & $\ket{0}$ & $\ket{3}$ \\
        8 & $\ket{0}$ & $\ket{2}$ & $\ket{0}$ & $\ket{0}$ \\
        9 & $\ket{1}$ & $\ket{2}$ & $\ket{0}$ & $\ket{1}$ \\
        10 & $\ket{2}$ & $\ket{2}$ & $\ket{0}$ & $\ket{2}$ \\
    \vdots & \vdots & \vdots & \vdots & \vdots \\
        57 & $\ket{3}$ & $\ket{2}$ & $\ket{3}$ & $\ket{3}$ \\
        58 & $\ket{0}$ & $\ket{3}$ & $\ket{3}$ & $\ket{0}$ \\
        59 & $\ket{1}$ & $\ket{3}$ & $\ket{3}$ & $\ket{1}$ \\
        60 & $\ket{2}$ & $\ket{3}$ & $\ket{3}$ & $\ket{2}$ \\

        \bottomrule
    \end{tabular}
    \caption{Shortened list of the 60 calibration states generated from the first hash function in $\Phi_{15,16,3}$ PHF, focusing on the initial four qubits. The notation $\ket{i}$ is an abbreviation for the calibration state $\ket{\psi_i}$.}
    \label{tab:Hashes_calib_example}
\end{table}

\subsection{Details on the other protocols}
\label{app:details_on_other_protocols}
In Sec.~\ref{sec:results}, alongside the correlated QREM protocol, three alternative protocols were examined. We provide further details on how each of these protocols diverges from the full correlated QREM protocol. 

\subsubsection{Factorized QREM}
The facotrized QREM protocol only mitigated individual qubit readout errors. From the detector tomography measurements shared between all protocols, the noisy POVM is reconstructed for each qubit. For a requested two-qubit observable, the POVM for the two qubits entering the observable are tensored together and used as the reconstructed noisy POVM in the REMST protocol \cite{Aasen2024}. 

\subsubsection{Two-point QREM}
The two-point QREM protocol is a naive implementation of the REMST protocol \cite{Aasen2024}. For a given two-qubit observable, the two qubits entering the observable are isolated, and the noisy POVMs are reconstructed for the two-qubit system. In this way correlated readout between the two qubits is captured, but no other correlations the two qubits may have are mitigated. A drawback with using this method is that effects from other correlated qubits are incorrectly assumed to be due to correlations between the two qubits, which can cause the REMST protocol to perform worse than performing no QREM, see, for example, the right side of Fig.~\ref{fig:100_qubit_example}. 
\subsubsection{Classical correlated QREM}
The classical correlated QREM protocol operates exactly like the correlated QREM protocol, the only difference is that once the noise-cluster POVMs are reconstructed, the off-diagonal elements in the computational basis are set to zero. In this way, only statistical distributions of the different outcomes are captured, which is equivalent to using a confusion matrix \cite{Maciejewski2020}.

\subsection{Pseudocode for the correlated QREM protocol}
A pseudocode representation of the two primary components of the correlated QREM protocol is presented. Noise characterization is detailed in Algorithm~\ref{Alg:QREM_noise_characterization_pseudocode}, while observable reconstruction is described in Algorithm~\ref{Alg:QREM_observable_pseudocode}.

\begin{algorithm}[H]
  \caption{Correlated noise structure characterization}
  \textbf{Input:}\\
  \quad $\Phi$: 2-local perfect hash family\\
  \quad $\{\psi_1, \psi_2, \psi_3, \psi_4\}$: Single-qubit calibration states\\
  \textbf{Output:}\\
  \quad $s$: List of noise clusters\\
  \quad $\mathbf{M}$: List of reconstructed noise-cluster POVMs\\
  
  \begin{algorithmic}[1]
    \For{each hash function $\phi$ in $\Phi$}
      \For{each qubit $k$}
          \State  Assign label $\phi(k)$ to qubit $k$
        \EndFor
        \State  Assign calibration state $\{(\psi_1,\psi_2), (\psi_1,\psi_2) , \dots ,(\psi_4,\psi_3)\}$ to qubits with label $(1,2)$
        \State  Measure calibration states
    \EndFor
    \For{each calibration state $\psi_n$}
        \State Measure state $\psi_n$ on all qubits
    \EndFor
    \State Create list $\alpha$ of all possible qubit pairs
    \For{each pair $i$ in $\alpha$}
      \State Reconstruct POVM $\mathbf{M}_i$
      \State Compute correlation coefficient $c_i$ using $\mathbf{M}_i$
    \EndFor
    \State $s \gets$ hierarchical clustering using $c$
    \For{each noise cluster $j$ in $s$}
        \State Measure informationally complete set of calibration states on noise cluster $j$
      \State $\mathbf{M}_j\gets$ reconstruct POVM on noise cluster $j$ 
    \EndFor
  \end{algorithmic}
  \label{Alg:QREM_noise_characterization_pseudocode}
\end{algorithm}

\begin{algorithm}[H]
	\caption{Reconstructing readout error-mitigated observables} 
\textbf{Input}:\\ $O$: List of requested observables\\
$s$: List of noise clusters\\
$\mathbf{M}$: List of reconstructed noise-cluster POVMs \\
\textbf{Output}:\\
$O^{\text{QREM}}$: List of readout error-mitigated observables\\
$\rho^\text{QREM}$: List of readout error-mitigated states \\
\begin{algorithmic}[1]
    \For {each observable $O_i$ in $O$}
       \State Find connected noise cluster $C_i$ from $s$
       \State Create connected noise-cluster POVM $M^C_i$ using $\mathbf{M}$ and $C_i$
        \State $\rho^\text{QREM}_i \leftarrow$ Perform REMST \cite{Aasen2024} on $C_i$ using $M^C_i$ 
        \State  $O_i^{\text{QREM}} \leftarrow \Tr(\rho^\text{QREM}_i O_i)$
    \EndFor
\end{algorithmic}
\label{Alg:QREM_observable_pseudocode}
\end{algorithm}

\section{Additional results}
\label{app:additional_results}
\subsection{Coherent error model}
\label{app:addional_results_coherent}

\begin{figure}[H]
 \centering
 \includegraphics[width=\linewidth]{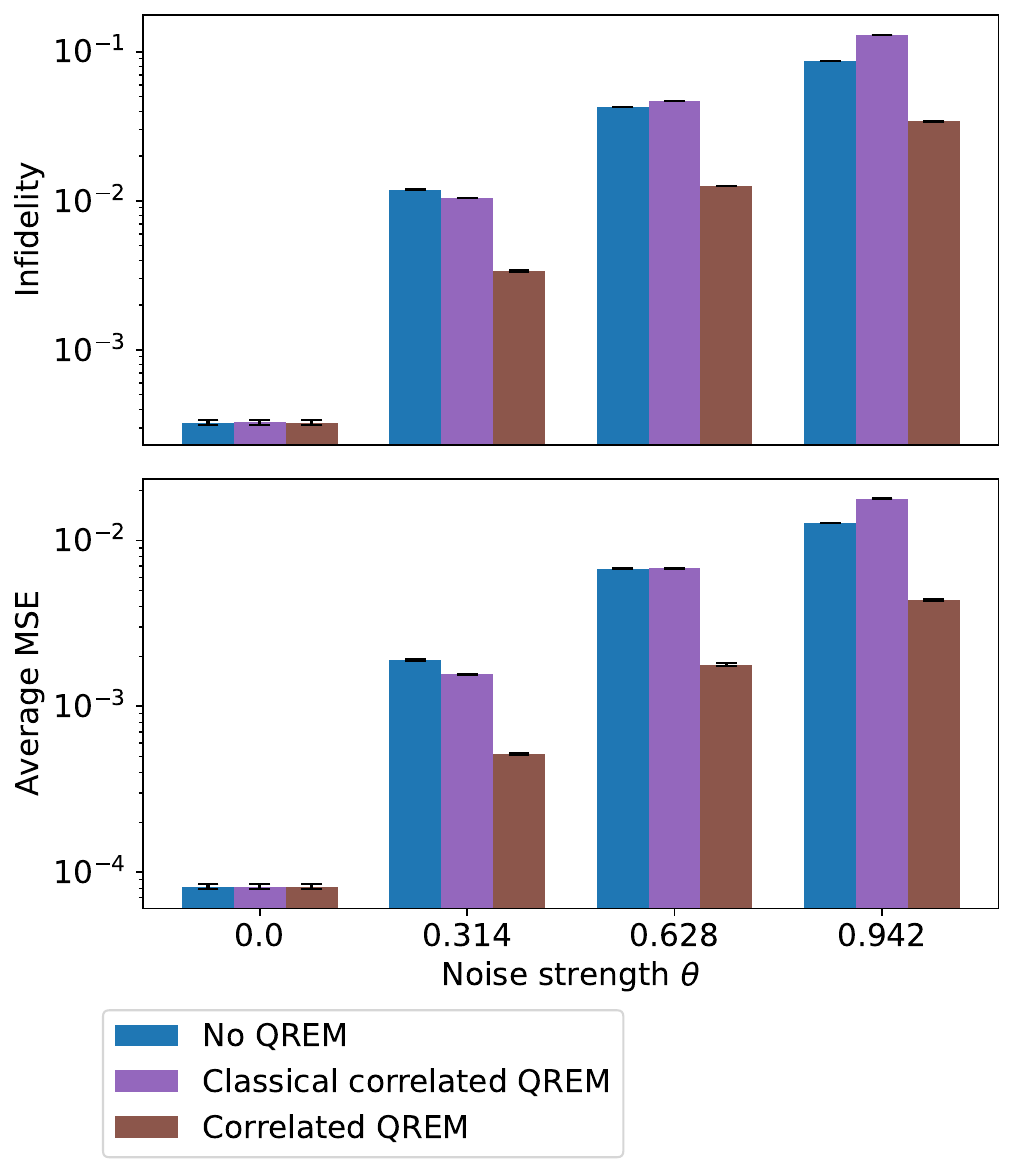}
 \caption{The coherent noise example from Sec.~\ref{sec:correlation_coefficients} with the ``complete'' linkage method. Everything else is identical to Fig.~\ref{fig:coherent_example}.}
\label{fig:coherent_qubit_example_with_complete_backend}
\end{figure}

Fig.~\ref{fig:coherent_qubit_example_with_complete_backend} illustrates another simulation from Sec.~\ref{sec:strong_coherent}, using the ``complete'' method for the hierarchical clustering linkage map. This method optimizes the clustering obtained based on the minimal ``largest distance" between the nodes that should be joined. As expected, the correlated QREM protocol performs significantly worse than when the ``average" linkage map is used, which optimized joined nodes with the smallest average distance.  This comes from the fact that only two-qubit correlated errors are correctly captured and three- and four-qubit correlations are missed. This issue is discussed in more detail in Appendix~\ref{app:Limitation_coherent_error}.

\subsection{100 qubits with mixed target states}
\label{app:100_qubits_mixed_states}
Previous simulations have used random pure states as target states. Although most experiments focus on generating pure states, circuit errors can result in output states being mixed even before noisy measurements are performed. To confirm that our protocol also works well for mixed states, we reran the simulations from Sec.~\ref{sec:100_qubit_example} using mixed target states, with the findings depicted in Fig.~\ref{fig:100_qubit_mixed_state_example}. The target states are generated from Haar-random pure states by applying a depolarizing channel. For each four-qubit pure state, the channel is applied with a strength $p$ uniformly selected from the range $[0.03, 0.07]$, leading to states with an average purity of approximately $0.9$. The four-qubit depolarizing channel is given by
\begin{equation}
    \mathcal{E}(\rho) = (1-p)\rho + \frac{p}{2^4}\mathbb{1}. 
\end{equation}
The overall performance for all readout error-mitigation methods considered is slightly better than with the pure target states in Fig.~\ref{fig:100_qubit_example}, but qualitatively the same. This is expected since pure states are harder to estimate than mixed states \cite{Mahler2013}, resulting in a higher overall state reconstruction accuracy for mixed states.

\begin{figure}
 \centering
 \includegraphics[width=\linewidth]{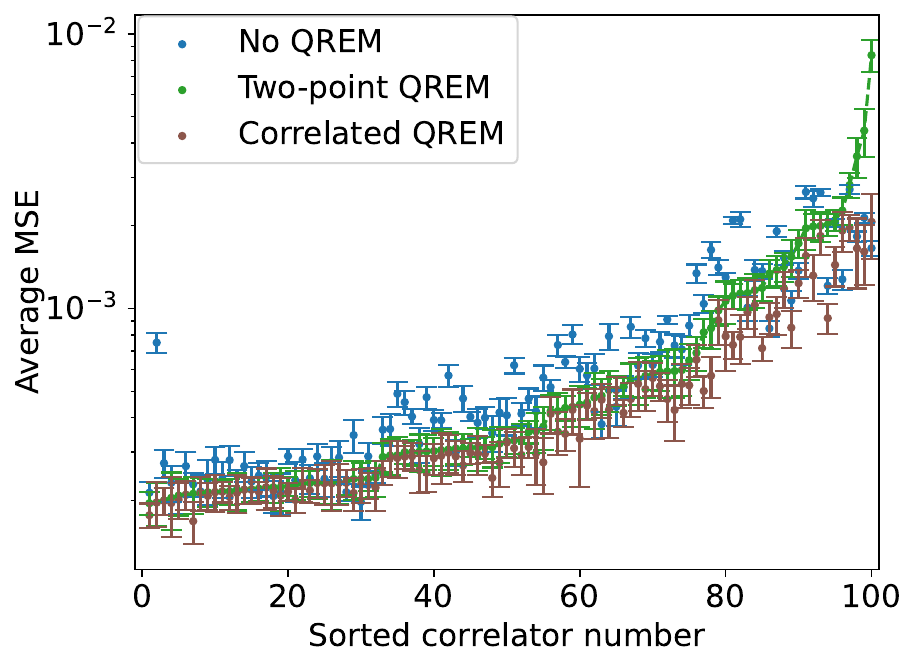}
 \caption{100-qubit example with mixed target states. Everything else is identical to Fig.~\ref{fig:100_qubit_example}.}
 \label{fig:100_qubit_mixed_state_example}
\end{figure}

\subsection{100-qubit simulation including single-qubit observables}
\label{app:additional_results_100_qubit}

\begin{figure}
 \centering
 \includegraphics[width=\linewidth]{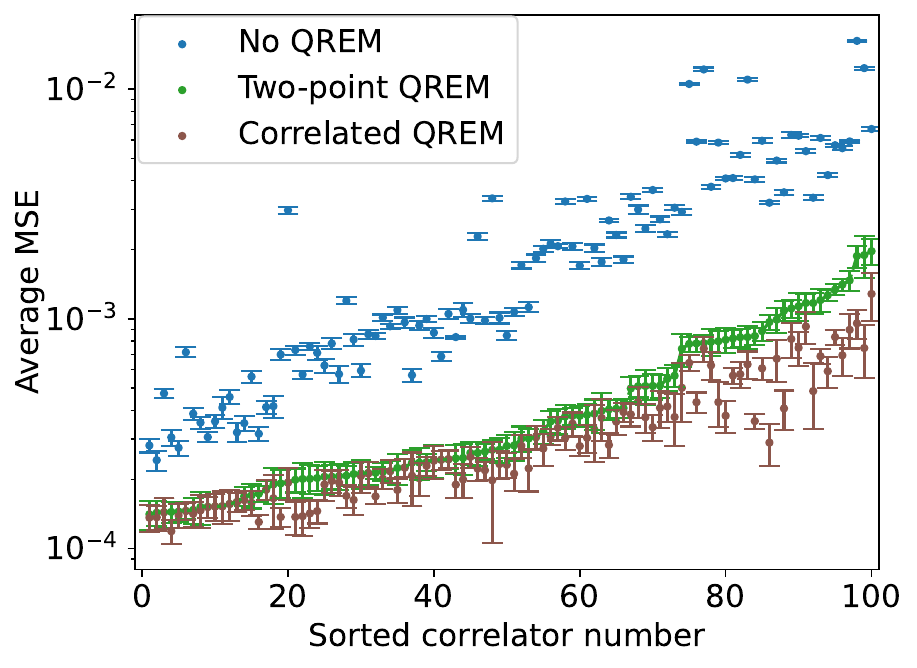}
 \caption{100-qubit example with the identity operator included in the list of Pauli operators. Everything else is identical to Fig.~\ref{fig:100_qubit_example}.}
 \label{fig:100_qubit_example_with_identity}
\end{figure}

Throughout the simulations presented in Sec.~\ref{sec:results}, the MSE was computed for all two-qubit Pauli observables and averaged to give representative protocol performance. In Fig.~\ref{fig:100_qubit_example_with_identity} the 100-quibt example is recomputed with the identity operator included in the list of Pauli observables.
The effect of including the identity operator is that the single-qubit observables are also averaged in the MSE. A slight decrease in the performance of the no QREM protocol can be seen, while both the two-point and correlated QREM protocols improve significantly on average, indicating that both protocols are also well suited to capture all lower-order observables.

\bibliography{refs}

\end{document}